\let\proof\@undefined
\let\endproof\@undefined
\newlength{\figwidths}
\newlength{\expwidths}
\newlength{\expwidthd}
\newtheorem{example}{Example}
\newtheorem{theorem}{Theorem}
\newtheorem{lemma}[theorem]{Lemma}
\newtheorem{proposition}[theorem]{Proposition}
\DeclareMathOperator*{\argmin}{arg\,min}
\begin{document}

\title{Time Series Data Cleaning: From Anomaly Detection to Anomaly Repairing (Technical Report)}

\numberofauthors{4}
\author{
\alignauthor
Aoqian Zhang\\ \vspace{0.1em}
       \affaddr{School of Software, \\Tsinghua University}\\
       \textrm{zaq13@mails.tsinghua.edu.cn}
\alignauthor
Shaoxu Song\\ \vspace{0.1em}
       \affaddr{School of Software, \\Tsinghua University}\\
       \textrm{sxsong@tsinghua.edu.cn}
\alignauthor
Jianmin Wang\\ \vspace{0.1em}
       \affaddr{School of Software, \\Tsinghua University}\\
       \textrm{jimwang@tsinghua.edu.cn}
\vspace{0.5em}\and
\alignauthor
Philip S. Yu\\ \vspace{0.1em}
       \affaddr{University of Illinois at Chicago}\\
       \affaddr{Institute for Data Science, Tsinghua University}\\
       \textrm{psyu@cs.uic.edu}
}

\maketitle

\begin{abstract}
Errors are prevalent in time series data, such as GPS trajectories or sensor readings. 
Existing methods focus more on anomaly detection but not on repairing the detected anomalies. 
By simply filtering out the dirty data via anomaly detection, applications could still be unreliable over the incomplete time series. 
Instead of simply discarding anomalies, we propose to (iteratively) repair them in time series data, 
by creatively bonding the beauty of temporal nature in anomaly detection with 
the widely considered minimum change principle in data repairing.
Our major contributions include: 
(1) a novel framework of iterative minimum repairing (IMR) over time series data, 
(2) explicit analysis on convergence of the proposed iterative minimum repairing, 
and 
(3) efficient estimation of parameters in each iteration. 
Remarkably, with incremental computation, we reduce the complexity of parameter estimation from $O(n)$ to $O(1)$.
Experiments on real datasets demonstrate the superiority of our proposal compared to the state-of-the-art approaches.
In particular, we show that (the proposed) repairing indeed improves the time series classification application.
\end{abstract}

\section{Introduction}
\label{sect:introduction}

Time series data are often found with dirty or imprecise values, such as GPS trajectories, sensor reading sequences \cite{DBLP:conf/vldb/JefferyGF06}, or even stock prices \cite{DBLP:journals/pvldb/LiDLMS12}.
For example,  
the price of SALVEPAR (SY) is misused as the price of SYBASE (SY), both of which share the same notation (SY) in some sources. 
It is different from the interesting anomaly that actually happens in real life, e.g., 
the temperatures sudden change from 20C to 10C in one day when cold air rushes in.
To distinguish such cases, we propose to employ some labeled truth of dirty observations.
(See more detailed examples on dirty data and their labeled truth in Example \ref{example:motivation}.)

\subsection{Motivation on Anomaly Repairing}
\label{sect-introduction-anomaly}

Applications, such as pattern mining \cite{DBLP:conf/kdd/Morchen06} 
or classification \cite{DBLP:journals/kais/XingPY12}, 
built upon the dirty time series data are obviously not reliable. 
Anomaly detection over time series  is often applied to filter out the dirty data
(see \cite{gupta2014outlier} for a comprehensive and structured overview of anomaly detection techniques).
That is, the detected anomaly data points are simply discarded as useless noises. 
Unfortunately, with a large number of consecutive data points eliminated,
the applications could be barely performed over the rather incomplete time series. 

Recent study \cite{DBLP:conf/kdd/SongLZ15} shows that repairing dirty values could improve clustering over spatial data.
For time series data, we argue that repairing the anomaly can also improve the applications such as time series classification \cite{DBLP:journals/kais/XingPY12}. 
A repair close to the truth helps greatly the applications.

\subsection{Potential Methods for Repairing}
\label{sect-introduction-existing}

A straightforward idea is to directly interpret the predication values in anomaly detection, 
e.g., by AR \cite{box1994time,hill2010anomaly,yamanishi2002unifying} or ARX \cite{box1994time,park2005outlier}, 
as repairs (see details in Section \ref{sect:model}). 
A data point is considered as anomaly if its (truth) predication significantly differs from (noisy) observation. 
Unfortunately, noisy/erroneous data are often close to the truth in practice, 
under the intuition that human or systems always try to minimize their mistakes, 
e.g., misspellings (John Smith vs. Jhon Smith), 
typos (555-8145 vs. 555-8195) 
as illustrated in \cite{DBLP:conf/sigmod/BohannonFFR05},
rounding off (76,821,000 vs. 76M) or unit error (76M vs. 76B) 
as shown in \cite{DBLP:journals/pvldb/LiDLMS12}.
Owing to such disagreement, the repairing performance of directly applying anomaly detection techniques is poor, as illustrated in both Example \ref{example:motivation} below and experiments in Section \ref{sect:experiment}.

On the other hand, constraint-based repairing SCREEN \cite{DBLP:conf/sigmod/SongZWY15}, strictly following the minimum change principle in data repairing \cite{DBLP:conf/sigmod/BohannonFFR05}, 
heavily relies on a proper constraint of speeds on value changes. 
The repairing is performed based on two consecutive points, i.e., considering only one historical point, and thus 
does not sense the temporal nature of errors. 
As shown in the following Example \ref{example:motivation}, 
the speed constraint-based SCREEN is effective in repairing  spike errors, 
but can hardly handle a sequence of consecutive dirty points.

In short, the anomaly detection method does not expect the minimized mistakes in practice, 
whereas the constraint-based repairing is not effective in addressing the temporal nature of errors.

\subsection{Intuition of Our Proposal}
\label{sect-introduction-intuition}

Since completely automatic data repairing might not work well in  repairing time series data (such as SCREEN \cite{DBLP:conf/sigmod/SongZWY15} observed in our experiments in Figure \ref{exp:ild-consize}),
enlightened by the idea of utilizing master data (a single repository of high-quality data) in data repairing \cite{DBLP:journals/pvldb/FanLMTY10}, 
we employ some labeled truth of dirty observations to advance the repair.
The truth can be obtained either by manual labeling or automatically by more reliable sources.  
For instance, accurate locations are manually marked in the map by user check-in activities (and utilized to repair the imprecise GPS readings). 
Moreover, periodical automatic labeling may take place in certain scenarios, e.g., precise equipments report accurate air quality data  (as labeled values) in a relatively long sensing period, 
while crowd and participatory sensing generates unreliable observations in a constant manner \cite{DBLP:conf/kdd/ZhengLH13}.

Being aware of both the error nature in anomaly detection and the minimum change principle in data repairing, 
we propose \emph{iterative minimum repairing} (IMR). 
The philosophy behind the proposed iterative minimum repairing is that the high confidence repairs in the former iterations could help the latter repairing.
Specifically, IMR minimally changes one point a time to obtain the most confident repair only, referring to the minimum change principle in data repairing that human or systems always try to minimize their mistakes. 
The high confidence repairs, 
together with the labeled truth of error points, 
are utilized to learn and enhance the temporal nature of errors in anomaly detection, and thus generate more accurate repair candidates in the latter iterations.

\begin{figure}[t]
\centering
\includegraphics[width=\figwidths]{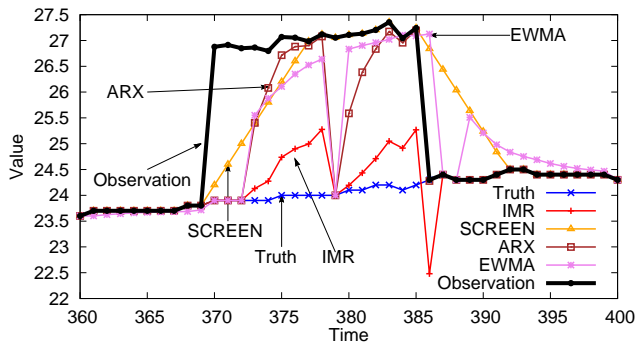}
\caption{An example segment of sensor readings}
\label{fig:simple-example}
\end{figure}

\begin{example}
\label{example:motivation}
Figure \ref{fig:simple-example} presents an example segment of sensor readings, denoted by black line. 
Suppose that sensor errors occur in the period from time point 370 to 385, 
where the observations  are shifted from the truth, 
e.g., owing to granularity mismatch or unit error.
To repair the errors, the truth of several observations are labeled, including time points 
$\{370, 371, 372, 379, 387\}$.

Existing speed constraint-based cleaning (SCREEN) \cite{DBLP:conf/sigmod/SongZWY15} could not effectively repair such continuous errors in a period 
(which is indeed also observed in Example 1 in \cite{DBLP:conf/sigmod/SongZWY15}). 
The reason is that speed constraints, restricting the amount of value changing relative to time difference, can detect sharp deviations such as from time point 369 to 370, but not continuous errors, e.g., in 383 and 384.  
The exponentially weighted moving average (EWMA) \cite{hellerstein2008quantitative} algorithm also hard to find a proper way to clean the trace. These two methods have similar repair trace.

By considering the predication values in anomaly detection as repairs (see more details in Section \ref{sect:model}), the result of ARX based repairing is also reported. 
ARX, considering the errors between truth and observations, shows better repair results than EWMA and SCREEN methods. 
 
Finally, our proposed IMR approach, with both error predication and minimum change considerations, obtains repairs  closest to the truth.
\end{example}

The iterative minimum repairing leads to new challenges: 
(1) whether the repairing process converges; and 
(2) how to efficiently/incrementally update the parameter of the temporal model over the repaired data after each iteration.
Both issues in anomaly repairing are not considered in the  anomaly detection studies.%

\paragraph*{Contributions}

Our major contributions in this paper are summarized as follows.

(1) We formalize the anomaly repairing problem, given a time series with some points having labeled truth, in Section \ref{sect:model}. 
The adaption of existing anomaly detection techniques (such as AR and ARX) is introduced for anomaly repairing.

(2) We devise an iterative minimum-change-aware repairing algorithm IMR, in Section \ref{sect:framework}. 
Remarkably, we illustrate that the ARX-based approach (in Section \ref{sect:model}) is indeed a special case of IMR with static parameter (Proposition \ref{proposition:fixed-phi-one-round}).

(3) We study the convergence of IMR in various scenarios, in Section \ref{sect:analysis}. 
In particular, the convergence is explicitly analyzed for the special case of IMR(1) with order $\mathit{p}=1$, 
which is sufficient to achieve high repair accuracy in practice (as shown in the experiments in Section \ref{sect:experiment}).
We prove that under certain inputs, the converged repair result could be directly calculated without iterative computing (Proposition~\ref{proposition:dynamic-ar1-one}).  

(4) We design efficient pruning and incremental computation for parameter estimation in each repair iteration, in Section \ref{sect:prune}. 
Rather than performing parameter estimation over all the $n$ points, 
matrices for parameter estimation could be pruned by simply removing rows with value 0 (Proposition \ref{proposition:matrix-reduce}). 
It is also remarkable that the incremental computation among different repair iterations (Proposition \ref{the:incremental}) could further reduce the complexity of parameter estimation from 
$O(n)$ to $O(1)$.

(5) Experiments on real datasets with both real and synthetic errors, in Section \ref{sect:experiment}, demonstrate that 
our IMR method shows significantly better repair performance than the state-of-the-art approaches, 
including the aforesaid anomaly detection and constraint-based repairing. 
Table \ref{table:notations} 
lists the notations  frequently used  in this paper.

\begin{table}[t]
 \caption{Notations}
 \label{table:notations}
 \centering
 \begin{tabular}{rl}
 \noalign{\smallskip}
 \hline\noalign{\smallskip} Symbol & Description \\ \noalign{\smallskip}
 \hline\noalign{\smallskip}
 $\mathit{x}$ & observation sequence of $n$ data points \\ \noalign{\smallskip}
 $\mathit{x}_i$ & value of $i$-th data point in $\mathit{x}$, a.k.a. $\mathit{x}[i]$ \\  \noalign{\smallskip}
 $\mathit{y}$ & truth-labeled/repaired sequence of $\mathit{x}$ \\ \noalign{\smallskip}
 $\mathit{z}$ & distance between $\mathit{x}$ and labeled/repaired $\mathit{y}$  \\ \noalign{\smallskip} 
\noalign{\smallskip}
 $\mathit{y}^{(k)}$ & sequence $\mathit{y}$ in the $k$-th iteration \\ \noalign{\smallskip}
 $\phi$ & parameter of AR($p$)/ARX($p$) with order $\mathit{p}$   \\ \noalign{\smallskip} 
 $\tau$ & predefined threshold of convergence \\ \noalign{\smallskip} 
 $\boldsymbol{\mathit{Z}}, \boldsymbol{\mathit{V}}$ & input matrices for parameter estimation \\ \noalign{\smallskip}
 $\boldsymbol{\mathit{A}}, \boldsymbol{\mathit{B}}$ & intermediate matrices for  parameter estimation \\ \noalign{\smallskip} \hline
 \end{tabular}
 \end{table}

\section{Preliminaries}\label{sect:model}

This section first introduces
the problem of anomaly repairing.  
We then adapt the existing anomaly detection models for anomaly repairing, 
i.e., AR without considering labeled data and ARX supporting labeled data.

The major issues of this simple adaption are:
(1) Applying predications with significant difference to the observation as repairs 
contradicts to the minimum change principle in data repairing \cite{DBLP:conf/sigmod/BohannonFFR05}, 
as discussed in the introduction. 
(2) A static parameter ($\phi$ in Equations \ref{equation:AR} and \ref{equation:ARX} below) needs to be preset, e.g., estimated from the dirty data during the initialization.

\subsection{Problem Statement}
\label{sect-probelm-statement}

Consider a time series of $n$  observations, $\mathit{x} = \mathit{x}[1],\dots,\mathit{x}[n]$,  
where each $\mathit{x}[i]$ is the value of the $i$-th data point. 
For brevity, we write $\mathit{x}[i]$ as $\mathit{x}_i$.

Let $\mathit{y}$ denote the labeled/repaired sequence of $\mathit{x}$. 
Each $\mathit{y}_i$ is either the labeled truth or the repaired value of $\mathit{x}_i$.

Given a time series $\mathit{x}$ and a 
partially labeled subset $\mathit{y}$ of $\mathit{x}$,
the repairing problem is to determine  the repairs $\mathit{y}_i$ of $\mathit{x}_i$ that are not labeled in $\mathit{y}$.

\begin{example}[Observation $\mathit{x}$,  partially labeled $\mathit{y}$, and fully repaired $\mathit{y}$]
\label{example:input}
Consider $\mathit{x} = \{6, 10, 9.6, 8.3, 7.7, 5.4, 5.6, 5.9, 6.3, \allowbreak 6.8, 7.5, 8.5\}$ 
with twelve data points of observations
in Figure \ref{fig:example-dirty}, 
where shift (up) errors occur on four points $x_2,\dots,x_5$. 
Suppose that five points are labeled with truth, i.e., the partially labeled $\mathit{y}$.
By repairing (using the methods presented below), we propose to obtain a fully repaired $\mathit{y}$, e.g., 
$\mathit{y} = \{6, 5.6, 5.4, 5.2, 5.4, 5.4, 5.6, 5.9, 6.3, 6.8, 7.5, 8.5\}$ as shown in Figure \ref{fig:example-dirty}. 
In the repaired $\mathit{y}$, $\mathit{x}_4$ and $\mathit{x}_5$ are changed from 8.3 and 7.7 to 5.2 and 5.39, respectively. 
The labeled $\mathit{y}_2$ and $\mathit{y}_3$ will not be modified in the repair result.

Figure \ref{fig:example-dirty} also presents another repair $y'$ by the approach of connecting the dots with the labeled values, 
i.e., linear interpolation \cite{wiener1949extrapolation}. 
As shown in Figure \ref{fig:example-dirty} and also indicated in \cite{DBLP:conf/sigmod/SongZWY15}, the major issue of this (smoothing) approach is the serious damage of almost all the (unlabeled) data points, 
such as $y'_{7} \dots y'_{11}$, 
which are originally correct and should not be modified. 
In contrast, 
our proposed method repairs $y_{4}$ and $y_{5}$ while leaving $y_{7} \dots y_{11}$ unchanged. 
\end{example}

\subsection{AR Model}
\label{sect:ar}

Intuitively, anomaly detection techniques could be adapted to  anomaly repairing. 
For instance, we consider the AR (autoregression) model \cite{hill2010anomaly, yamanishi2002unifying} as follows: 
\begin{align}
\mathit{x}'_t &= c + \sum\limits_{i=1}^p\phi_i\mathit{x}_{t-i} + \epsilon_t 
\label{equation:AR}
\end{align}
where 
$\mathit{x}'_t$ is the prediction of $\mathit{x}_t$, 
$\mathit{p}$ is the order,  
$\phi_i$ is the parameter of the model, 
$c$ is a constant defined by 
$c = \mu(1-\sum\limits_{i=1}^p\phi_i)$, 
$\mu$ is mean value of the process and 
$\epsilon_t$ is white noise
(usually Gaussian white noise \cite{box1994time}, a normal random variable generated according to the Gaussian distribution with mean $\mu=0$ and variance $\sigma^2$; in other words, $c=0$).

If $\mathit{x}'_t$ significantly differs from the original observation $\mathit{x}_t$, having 
$|\mathit{x}'_t-\mathit{x}_t|>\tau$ where $\tau$ is a predefined threshold, 
this predication is accepted $\mathit{x}_t=\mathit{x}'_t$, a.k.a.\ a repair. 
The intuition behind is that a farther distance indicates the higher probability of being an outlier.
The threshold $\tau$ can be decided by observing the statistical distribution of distances between $\mathit{x}_{\mathit{t}}' $ and $\mathit{x}_
{\mathit{t}}$,  using the prediction interval \cite{hill2010anomaly,han2011data}.

The AR-based repairing procedure is thus: 
(1) replace $\mathit{x}_t$ by $\mathit{y}_t$ if it is labeled,
(2) learn parameter $\phi$ of AR($\mathit{p}$) from $\mathit{x}$, and 
(3) fill all unlabeled $\mathit{y}_t$ by AR($\mathit{p}$) over $\mathit{x}$, having 
\begin{align}\label{equation:AR-repair}
\mathit{y}_t = 
\begin{cases}
\mathit{x}'_t & \quad \textrm{if } \mathit{y}_t \text{ is unlabeled and } |\mathit{x}'_t-\mathit{x}_t|>\tau \\
\mathit{x}_t  & \quad \textrm{otherwise } 
\end{cases} 
\end{align}

\begin{figure}[t]
\centering
\includegraphics[width=\figwidths]{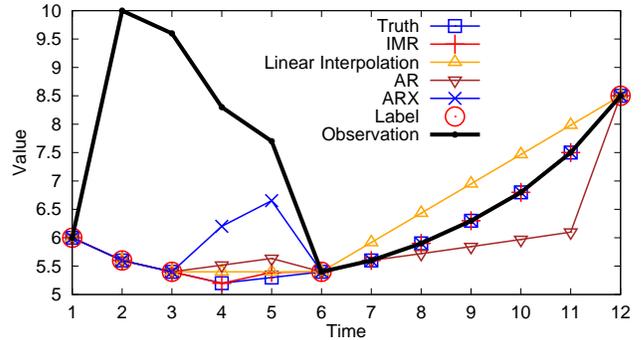}
\caption{Example of observations and repairs}
\label{fig:example-dirty}
\end{figure}

\begin{example}[Example \ref{example:input} continued]
\label{example:AR}
Consider again $\mathit{x} = \{6, 10, 9.6, 8.3, 7.7, 5.4, 5.6, 5.9, 6.3, 6.8, 7.5, 8.5\}$
in Figure \ref{fig:example-dirty}. 
For simplicity, we use AR(1) with order $\mathit{p} = 1$ and $c$ = 0, i.e., $\mathit{x}_t' = \phi_1\mathit{x}_{t-1}$. 
By ordinary least square \cite{rao2009linear}, we estimate the parameter $\phi$ from $\mathit{x}$, having $\phi_1 = 1.022$. 
Let $\tau = 0.1$. 
$\mathit{y}_{1}$ is labeled with truth. 
Referring to Equation \ref{equation:AR-repair}, it outputs unchanged $\mathit{y}_{1}  = 6$.
Similarly for $\mathit{y}_{2}$ and $\mathit{y}_{3}$. 
We have $\mathit{x}_{4}' = \phi_1\mathit{x}_1 = 1.022 * 5.4 = 5.52$. 
Since $|5.52 - 8.3| = 2.78 > 0.1$, $\mathit{x}_{4}'$ is accepted as new $\mathit{x}_{4}$.

Similarly, we have $\mathit{x}_{5}' = \phi_1\mathit{x}_{4} = 1.022 * 5.52 = 5.64$. 
Referring to $|5.64 - 7.7| = 2.06 > 0.1$, the prediction is accepted.
So on and so forth, we obtain the final repaired result 
$\mathit{y} = \{6, 5.6, 5.4, 5.52, 5.64, 5.4, 5.6, 5.72, 5.84, 5.97, 6.10, 8.5\}$. 
Its RMS error is 0.51 (see Section \ref{sect:experiment} for RMS definition).
\end{example}

\subsection{ARX Model}
\label{sect:ARX}

In order to utilize the labeled $\mathit{y}$, we consider the ARX model (autoregressive model with exogenous inputs) \cite{park2005outlier} 
\begin{align}
\mathit{y}'_t = \mathit{x}_t + \sum\limits_{i=1}^p\phi_i(\mathit{y}_{t-i}-\mathit{x}_{t-i}) + \epsilon_t 
\label{equation:ARX}
\end{align}
where $\mathit{y}'_t$ is the possible repair of $\mathit{x}_t$,
and others are the same to the aforesaid AR model. 
As shown in Equation \ref{equation:ARX}, not only the preceding observations 
$\mathit{x}_{t-i}$ will affect the determination of $\mathit{y}'_t$, 
but also the previously labeled/repaired $\mathit{y}_{t-i}$.

The ARX-based repairing procedure is thus: 
(1) learn parameter $\phi$ of ARX($\mathit{p}$) from $\mathit{x}$ and partially labeled $\mathit{y}$, and 
(2) fill all unlabeled $\mathit{y}_t$ by ARX($\mathit{p}$), similar to Equation \ref{equation:AR-repair} by replacing $\mathit{x}'_t$ with $\mathit{y}'_t$.

\begin{example}[Example \ref{example:input} continued]
\label{example:ARX}
Consider  again $\mathit{x} = \{6, 10, 9.6, 8.3, 7.7, 5.4, 5.6, 5.9, 6.3, 6.8, 7.5, 8.5\}$ and the partially labeled $\mathit{y}$ over five points, in Figure \ref{fig:example-dirty}.
For simplicity, we also use ARX(1) with order $\mathit{p} = 1$, i.e., $\mathit{y}_t' = \mathit{x}_t + \phi_1(\mathit{y}_{t-1}-\mathit{x}_{t-1})$. 
Similar to AR in Example \ref{example:AR}, we estimate the parameter $\phi$ by ordinary least square \cite{rao2009linear}, having $\phi_1 = 0.5$. 
Let $\tau = 0.1$.
Again, the labeled $\mathit{y}_{3}=5.4$ is not modified. 
For the fourth point, we have $\mathit{y}_{4}' = 8.3 + 0.5*(5.4-9.6) = 6.2$. Since $|6.2 - 8.3| = 2.1 > 0.1$, 
we assign $\mathit{y}_{4} = \mathit{y}_{4}' = 6.2$.
Finally, the  repair result by ARX is $\mathit{y} = \{6, 5.6, 5.4, 6.20, 6.65, 5.4, 5.6, 5.9, 6.3, 6.8, 7.5, 8.5\}$ 
with RMS error 0.49, lower than that of AR in Example \ref{example:AR}.
\end{example}

We consider ARX model since it can capture the difference between the observed errors and labeled truths, 
while other methods such as AR and SCREEN ignore (cannot utilize) such differences.
By modeling such differences between errors and truths rather than original values, 
both ARX and our proposed IMR may not deal with Spike errors (i.e., with \# consecutive errors = 1 in Figure \ref{exp:ild-consize}) as good as SCREEN \cite{DBLP:conf/sigmod/SongZWY15}.
Nevertheless, with iterative repairing, 
IMR always shows significantly better results in addressing a large number of consecutive errors (see Figure \ref{exp:ild-consize} as well). 
ARX cannot address well such consecutive errors either, since it applies only significant changes which is contradict to the minimum change principle in data repairing as discussed in Section \ref{sect-introduction-existing} in the Introduction.


\section{Repair Algorithm}\label{sect:framework}

Unlike the existing anomaly detection model that may over-change the data via one-pass repairing (as discussed in the introduction, illustrated in Figure \ref{fig:example-dirty}, and observed in experiments in Section \ref{sect:experiment}),
we propose to progressively repair the data, being aware of both the error nature in anomaly detection and the minimum change principle in data repairing, 
so that the high confidence repairs in the former iterations could help the repairing in the latter steps.%

\subsection{Iterative Repairing}
\label{sect:overview}

Let $\mathit{y}^{(k)}$ denote the sequence $\mathit{y}$ in the $k$-th iteration, 
where $\mathit{y}^{(0)}$ is the partially labeled time series in the input. 
Since $\mathit{y}^{(0)}$ is incomplete (partially labeled), 
to initialize, 
we assign $\mathit{y}_t^{(0)}=\mathit{x}_t$ if $\mathit{y}_t^{(0)}$ is not labeled. 
Recall that the labeled values should not be repaired, i.e., 
$\mathit{y}^{(k)}_t=\mathit{y}^{(0)}_t$ if $\mathit{y}^{(0)}_t$ is labeled. 
 
Algorithm \ref{algorithm:arx-iteration} presents the iterative minimum repairing procedure, IMR($\mathit{p}$),
whose inputs are the observation time series $\mathit{x}$ and partially labeled $\mathit{y}^{(0)}$. 
It outputs $\mathit{y}^{(k)}$ with all the labeled $\mathit{y}^{(0)}_t$ unchanged and unlabeled $\mathit{y}^{(0)}_t$ repaired.

The major steps include: 

\textbf{(S1) Parameter estimation}, in Line \ref{algorithm:repair-parameter}, 
learns the parameter of ARX($p$) in the $k$-th iteration, denoted by $\phi^{(k)}$, from $\mathit{x}$  and the current $\mathit{y}^{(k)}$.

\textbf{(S2) Repair candidate generation}, in Line \ref{algorithm:repair-candidate}, 
computes the possible repairs $\hat{\mathit{y}}^{(k)}$, according to ARX($\mathit{p}$) 
w.r.t.\ $\mathit{x}$, $\mathit{y}^{(k)}$ and $\phi^{(k)}$.

\textbf{(S3) Repair evaluation}, 
in Line \ref{algorithm:repair-evaluate}, 
determines one of the repairs to accept, 
$ \mathit{y}_t^{(k+1)} = \hat{\mathit{y}}_t^{(k)}$, 
referring to the minimum change principle in data repairing \cite{DBLP:conf/sigmod/BohannonFFR05}.

As shown in Line \ref{algorithm:repair-converge}, the procedure repeats, until the repair converges, 
e.g., having 
\begin{align}
\label{equation:convergence}
|\mathit{y}_j^{(k)} - \mathit{y}_j^{(k+1)}| \leq \tau , 
j = 1, \dots,  n.
\end{align}
where $\tau$ a threshold of convergence,
or a maximum number of iterations is reached.
Setting \emph{max-num-iterations} is a remedy to avoid waiting for convergence in practice (see Section \ref{sect:max-iteration-exp} for discussion and evaluation). 

\begin{algorithm}\label{algorithm:arx-iteration}
\caption{IMR($\mathit{p}$)}
 \KwIn{time series $\mathit{x}$ and partially labeled $\mathit{y}^{(0)}$}
 \KwOut{$\mathit{y}^{(k)}$ with all the labeled $\mathit{y}^{(0)}_i$ unchanged and unlabeled $\mathit{y}^{(0)}_j$ repaired}

\For{$k \leftarrow 0$ \KwTo max-num-iterations}{
 $\phi^{(k)} \leftarrow \mathsf{Estimate}(\mathit{x}, \mathit{y}^{(k)})$\; 
 \label{algorithm:repair-parameter}
 $\hat{\mathit{y}}^{(k)} \leftarrow \mathsf{Candidate}( \mathit{x}, \mathit{y}^{(k)}, \phi^{(k)})$\; \label{algorithm:repair-candidate}
 $\mathit{y}^{(k+1)} \leftarrow \mathsf{Evaluate}(\mathit{x}, \mathit{y}^{(k)}, \hat{\mathit{y}}^{(k)})$\;
 \label{algorithm:repair-evaluate}
 \If{$\mathsf{Converge}(\mathit{y}^{(k)}, \mathit{y}^{(k+1)})$\label{algorithm:repair-converge}}{
 \textbf{break}\;
 }
 $k\leftarrow k+1$\;
}
 \Return{$\mathit{y}^{(k)}$}
\end{algorithm} 

\begin{example}[Algorithm overview, Example \ref{example:input} continued]
\label{example:algorithm-overview}
Consider again $\mathit{x} = \{6, 10, 9.6, 8.3, 7.7, 5.4, 5.6, 5.9, 6.3, 6.8, \allowbreak 7.5, 8.5\}$
in Figure \ref{fig:example-dirty}. 
According to five labeled data points, 
we assign $\mathit{y}^{(0)} = \{6, 5.6, 5.4, 8.3, 7.7, 5.4, 5.6, 5.9, 6.3, 6.8, 7.5, 8.5\}$, 
where the unlabeled points are initialized by $\mathit{y}_t^{(0)} = \mathit{x}_t$,  
e.g., $\mathit{y}_{4}^{(0)} = \mathit{x}_{4} = 8.3$. 

In each iteration, the IMR algorithm 
(1) learns the parameter, e.g., $\phi_1^{(0)}=0.5$ for $\mathit{p}=1$;
(2) generates candidates for repairing, such as $\hat{\mathit{y}}^{(0)}  = \{-, -, -, 6.2, 7.7, -, 5.6, 5.9, 6.3, 6.8, \allowbreak 7.5, -\}$; and
(3) selects one repair to conduct, and form the new sequence, 
say $\mathit{y}^{(1)} = \{6, 5.6, 5.4, {6.2}, 7.7, 5.4, 5.6, 5.9, 6.3,\allowbreak 6.8, 7.5, 8.5\}$.

The procedure repeats until converging.
The final output is $\mathit{y}^{(7)} = \{6, 5.6, 5.4, 5.20, 5.39, 5.4, 5.6, 5.9, 6.3, 6.8, 7.5, 8.5\}$ with RMS error 0.03.
Details on each step are presented in the following examples.
\end{example}

\subsection{Parameter Estimation}
\label{sect:parameter-estimation}

The parameter estimation step S1 (in Line \ref{algorithm:repair-parameter} in Algorithm \ref{algorithm:arx-iteration}) 
estimates the parameter $\phi^{(k)}$ for ARX($\mathit{p}$), 
given $\mathit{x}, \mathit{y}^{(k)}$. 
Existing methods such as 
Ordinary Least Square \cite{rao2009linear}
or 
Yule-Walker Equations \cite{Cheng:2014aa}
can be directly employed. 
For instance, by Ordinary Least Square, 
we have 
\begin{align}
\label{equation:ols}
\phi^{(k)} &= ((\boldsymbol{\mathit{Z}}^{(k)})'\boldsymbol{\mathit{Z}}^{(k)})^{-1}(\boldsymbol{\mathit{Z}}^{(k)})'\boldsymbol{\mathit{V}}^{(k)}
\end{align}
where 
\begin{align*}
\boldsymbol{\mathit{V}}^{(k)} = & 
\begin{pmatrix}
\mathit{y}_{\mathit{p} + 1}^{(k)} - \mathit{x}_{\mathit{p} + 1} \\
\mathit{y}_{\mathit{p} + 2}^{(k)} - \mathit{x}_{\mathit{p} + 2} \\
\vdots \\
\mathit{y}_{n}^{(k)} - \mathit{x}_{n}
\end{pmatrix}
, \qquad
\phi^{(k)} = 
\begin{pmatrix}
\phi_1^{(k)} \\
\phi_2^{(k)} \\
\vdots \\
\phi_{\mathit{p}}^{(k)}
\end{pmatrix} 
, \\
\boldsymbol{\mathit{Z}}^{(k)} = & 
\begin{pmatrix}
\mathit{y}_{\mathit{p}}^{(k)} - \mathit{x}_{\mathit{p}}
 & \mathit{y}_{\mathit{p}-1}^{(k)} - \mathit{x}_{\mathit{p}-1}
 & \ldots 
 & \mathit{y}_1^{(k)} - \mathit{x}_1 \\
\mathit{y}_{\mathit{p}+1}^{(k)} - \mathit{x}_{\mathit{p}+1}
 & \mathit{y}_{\mathit{p}}^{(k)} - \mathit{x}_{\mathit{p}}
 & \ldots 
 & \mathit{y}_2^{(k)} - \mathit{x}_2 \\
\vdots & \vdots & \ddots & \vdots \\
\mathit{y}_{n-1}^{(k)} - \mathit{x}_{n-1} 
 & \mathit{y}_{n-2}^{(k)} - \mathit{x}_{n-2}
 & \ldots 
 & \mathit{y}_{n-\mathit{p}}^{(k)} - \mathit{x}_{n-\mathit{p}}
\end{pmatrix}. 
\end{align*}

\begin{example}[Parameter estimation on $\mathit{y}^{(0)}$, Example \ref{example:algorithm-overview} continued]\label{example:parameter-estimation}
Consider $\mathit{x} = \{6, 10, 9.6, 8.3, 7.7, 5.4, 5.6, 5.9, 6.3, 6.8, \allowbreak 7.5, 8.5\}$ and $\mathit{y}^{(0)} = \{6, 5.6, 5.4, 8.3, 7.7, 5.4, 5.6, 5.9, 6.3, 6.8, \allowbreak 7.5, 8.5\}$. 
Given order $\mathit{p}=1$, we have $\boldsymbol{\mathit{V}}^{(0)} = \{-4.4,-4.2,0,\allowbreak 0,0,\allowbreak 0,0,0,0,0,0\}'$ 
with 11 rows and 1 column,
and $\boldsymbol{\mathit{Z}}^{(0)} = \{0,-4.4, \allowbreak -4.2,0,0,0,0,0,0,0,0\}'$
with 11 rows and 1 column.
Referring to Equation \ref{equation:ols},
the parameter is estimated by
$$\phi_1^{(0)} = \frac{(-4.4)*(-4.2)}{(-4.4)^2+(-4.2)^2} = 0.5.$$
\end{example}

Owing to the iterative repairing,  online incremental parameter estimation is necessary, which is not studied in the previous studies (see our approach in Section \ref{sect:prune}).

\subsection{Candidate Generation}
\label{sect:candidate-generation}

The repair candidate generation step S2 (in Line \ref{algorithm:repair-candidate} in Algorithm \ref{algorithm:arx-iteration})
employs ARX($\mathit{p}$)  to infer the candidate repair 
$\hat{\mathit{y}}^{(k)} = \phi^{(k)}\cdot(\mathit{y}^{(k)} - \mathit{x}) + \mathit{x},$
referring to the estimated parameter $\phi^{(k)}$.  
More specifically,
for each point $t$, 
$\hat{\mathit{y}}_t^{(k)}$ is given~by  
\begin{align}
\label{equation:candidate}
\hat{\mathit{y}}_t^{(k)} = \sum_{i=1}^{p}\phi_i^{(k)}(\mathit{y}_{t-i}^{(k)}-\mathit{x}_{t-i}) + \mathit{x}_t
\end{align}
according to $\mathit{y}_{t-1}^{(k)}, \ldots, \mathit{y}_{t-p}^{(k)}$. 
We note that only candidates with $|\hat{\mathit{y}}_t^{(k)} - \mathit{y}_t^{(k)}| > \tau$ need to be considered 
referring to the convergence condition in Equation \ref{equation:convergence}.

\begin{example}[Repair candidate $\hat{\mathit{y}}^{(0)}$, Example \ref{example:parameter-estimation} continued]\label{example:repair-candidate}
Consider the parameter $\phi_1^{(0)} = 0.5$ estimated in Example \ref{example:parameter-estimation}. 
Let threshold $\tau = 0.1$. 
Referring to Equation \ref{equation:candidate}, 
we have 
$\hat{\mathit{y}}_{4}^{(0)} = 0.5 * (5.4 - 9.6) + 8.3 = 6.2$
with $|\hat{\mathit{y}}_{4}^{(0)} - \mathit{y}_{4}^{(0)}| = |6.2 - 8.3| = 2.1 > 0.1$, 
and 
$\hat{\mathit{y}}_{5}^{(0)} = 0.5 * (8.3 - 8.3) + 7.7 = 7.7$ 
with $|7.7 - 7.7| = 0 < 0.1$.
The repair candidates are $\hat{\mathit{y}}^{(0)}  = \{\textcolor{blue}{+}, \textcolor{blue}{+}, \textcolor{blue}{+}, 6.2, -, \textcolor{blue}{+}, -, -, -, -, -\textcolor{blue}{+}\}$ 
where 
`$\textcolor{blue}{+}$' corresponds to the labeled points and 
`$-$' denotes no candidates. 
That is, we need to consider only one candidate $\hat{\mathit{y}}_{4}^{(0)}$ for repairing.
\end{example}

\subsection{Repair Evaluation}
\label{sect:repair-evaluation}

The repair evaluation step S3  
(in Line \ref{algorithm:repair-evaluate} in Algorithm \ref{algorithm:arx-iteration})
selects one repair to accept, 
i.e., assigning $ \mathit{y}_t^{(k+1)} = \hat{\mathit{y}}_t^{(k)}$
the aforesaid generated repair candidate.
Following the minimum change principle in data repairing \cite{DBLP:conf/sigmod/BohannonFFR05}, 
the repair that minimally differs from its original input is preferred with higher confidence.
The repaired result in each iteration is: 
\begin{align}\label{equation:minimum-solution}
\mathit{y}_t^{(k+1)} = 
  \begin{cases}
\hat{\mathit{y}}_t^{(k)} & \quad \textrm{if } t = \argmin_{i} |\hat{\mathit{y}}_i^{(k)} - \mathit{x}_i| \\
\mathit{y}_t^{(k)} & \quad \text{otherwise }
\end{cases} .
\end{align}

Remarkably, only one data point with the minimum change (most confident) is repaired in each iteration, which is more efficient than the NP-hard problem of minimizing the overall changes w.r.t.\ integrity constraints~\cite{DBLP:conf/sigmod/BohannonFFR05}.%

\begin{example}[Minimum repair $\hat{\mathit{y}}_t^{(1)}$, Example \ref{example:repair-candidate} continued]
\label{example:repair-evaluation}
Since there is only one repair candidate obtained in Example \ref{example:repair-candidate}, 
i.e.,  $\hat{\mathit{y}}_{4}^{(0)} = 6.2$, it is the minimum repair (among all candidates).
The sequence after the first iteration becomes $\mathit{y}^{(1)} = \{6, 5.6, 5.4, {6.2}, 7.7, 5.4, 5.6, 5.9, 6.3, 6.8, 7.5, 8.5\}$.
\end{example}

Note that the minimum change principle \cite{DBLP:conf/sigmod/BohannonFFR05} in data repairing is based on the intuition that human or systems always \emph{try to} minimize their mistakes. 
However, it is not guaranteed that the minimum change repair always corresponds to the true value. 
Therefore, similar to other minimum change-based data repairing studies \cite{DBLP:conf/sigmod/BohannonFFR05,DBLP:conf/icde/ChuIP13}, the accuracy of the final results is unlikely to have theoretical guarantees, since there is no constraint on how far the errors may diverge from the truth.
For this reason, we can only evaluate the correctness of the proposed repair by comparing to the ground truth in experiments, similar to other data repairing studies \cite{DBLP:conf/sigmod/BohannonFFR05,DBLP:conf/icde/ChuIP13} as well. 
Nevertheless, 
we can show that the efficient pruning and incremental computation are safe (Propositions \ref{proposition:matrix-reduce} and \ref{the:incremental}), 
i.e., the accuracy of the final results with efficient computing is theoretically guaranteed to be the same as the results of original IMR without pruning and incremental computation.

\section{Convergence Analysis}\label{sect:analysis}

In this section, we analyze the convergence of iterative repairing, i.e., 
$\lim_{k\rightarrow+\infty} \sum_{i=1}^n \left(
\mathit{y}_i^{(k+1)} - \mathit{y}_i^{(k)}
\right)
 = 0, $
which is essential to the termination of Algorithm \ref{algorithm:arx-iteration}. 
While the general convergence problem is still open, we study the convergence of the proposed method in certain special cases in this section for two reasons: 
(1) we illustrate that the ARX-based approach is indeed a special case of the proposed IMR with static parameter (Proposition \ref{proposition:fixed-phi-one-round}) in Section \ref{sect:parameter-static}; 
(2) we identify another special case with convergence guarantee in Section \ref{sect:parameter-converged}, which enables online repairing over streaming data without iteration 
(also see Sections \ref{sect:special-one-label} and \ref{sect:online-exp} for more details and experiments).

\subsection{Static Parameter}
\label{sect:parameter-static}

We study this special case in order to illustrate the relationship between our proposed IMR and the existing ARX.
Let us first analyze the convergence of IMR (Proposition \ref{proposition:mean-converge-fixed}) 
and then illustrate their equivalence in certain case (Proposition \ref{proposition:fixed-phi-one-round}). 

Rather than dynamically updating the parameter $\phi^{(k)}$ in each iteration, 
in Line \ref{algorithm:repair-parameter} in Algorithm \ref{algorithm:arx-iteration}, 
a special case is to specify a static parameter, $\phi^{(k)}=\phi^{(0)}$, for all the iterations.

\begin{proposition}
\label{proposition:mean-converge-fixed}
With a static parameter $\phi^{(k)} = \phi, \forall k$, 
the repair result converges, 
i.e., 
$$\lim_{k\rightarrow+\infty} \sum_{i=1}^n \left(
\mathit{y}_i^{(k+1)} - \mathit{y}_i^{(k)}
\right)
 = 0.$$
\end{proposition}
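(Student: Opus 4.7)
The plan is to pass to the ``offset'' variables $z_t^{(k)} := \mathit{y}_t^{(k)} - \mathit{x}_t$. In these variables Equation~\ref{equation:candidate} becomes the purely linear map $\hat{z}_t^{(k)} = \sum_{i=1}^{p}\phi_i\,z_{t-i}^{(k)}$; the labeled indices satisfy $z_t^{(k)} = z_t^{(0)}$ for all $k$; and the update rule (Equation~\ref{equation:minimum-solution}) picks a single index $t^{(k)} \in \argmin\{|\hat{z}_t^{(k)}| : t \text{ unlabeled and } |\hat{z}_t^{(k)} - z_t^{(k)}| > \tau\}$ and sets $z_{t^{(k)}}^{(k+1)} = \hat{z}_{t^{(k)}}^{(k)}$. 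Since exactly one coordinate changes per iteration, the sum in the proposition collapses to $\sum_{i=1}^{n}(\mathit{y}_i^{(k+1)} - \mathit{y}_i^{(k)}) = \hat{z}_{t^{(k)}}^{(k)} - z_{t^{(k)}}^{(k)}$, and every non-zero such change has magnitude strictly greater than $\tau$. Consequently the claimed convergence is equivalent to the algorithm performing only finitely many accepted updates, i.e.\ reaching a state in which no candidate satisfies the threshold condition.

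I would then prove finite termination by a potential-function argument. The natural first choice is $\Phi(z) := \sum_{t \in \mathcal{U}} z_t^{2}$ summed over the unlabeled coordinates $\mathcal{U}$. When iteration $k$ updates, the $\argmin$ rule forces $|\hat{z}_{t^{(k)}}^{(k)}|$ to be the smallest magnitude among all candidates, and the threshold condition forces $|\hat{z}_{t^{(k)}}^{(k)} - z_{t^{(k)}}^{(k)}| > \tau$; combining these with the boundedness of $z^{(k)}$ (which follows from the fixed labeled values and the fixed linear map $\phi$) I would aim to derive a strict decrement $\Phi(z^{(k+1)}) \le \Phi(z^{(k)}) - c\,\tau^{2}$ for some $c > 0$, giving termination in $O(\Phi(z^{(0)})/(c\tau^{2}))$ steps. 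As a backup, in case the $\ell^{2}$ potential can transiently grow, the compactness route extracts a convergent subsequence $z^{(k_j)} \to z^{*}$; continuity of the linear candidate map then forces every gap $|\hat{z}_t(z^{*}) - z_t^{*}|$ to lie within $\tau$ at the limit, contradicting the hypothesis that each $k_j$ contributes a gap strictly above $\tau$.

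The main obstacle will be establishing the strict per-step decrement. The selection rule minimizes $|\hat{z}_t^{(k)}|$ rather than $|z_t^{(k)}|$, so a coordinate whose current offset is already small can legitimately be overwritten by a slightly larger candidate, momentarily raising the naive $\ell^{2}$ potential. Engineering a potential that strictly decreases (perhaps a mixed norm weighted by the spectral structure of $\phi$), or else carrying out the compactness argument with enough care to rule out oscillations at accumulation points, is where the bulk of the work will lie.
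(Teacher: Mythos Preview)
Your reduction to finite termination is correct and well-stated: since exactly one coordinate changes per iteration and every accepted change has magnitude exceeding $\tau$, the limit in the proposition is equivalent to the algorithm accepting only finitely many updates. From that point on, however, you are working much harder than necessary because you overlook the one structural feature that the paper's proof exploits: by Equation~\ref{equation:candidate}, the candidate $\hat{z}_t^{(k)}$ depends only on the \emph{preceding} offsets $z_{t-1}^{(k)},\dots,z_{t-p}^{(k)}$, never on any $z_s^{(k)}$ with $s\ge t$. The paper uses this causal, left-to-right dependency for an induction on the index $t$: the first $p$ points never change; once all of $t-1,\dots,t-p$ have stopped changing (say after iteration $K$), the candidate $\hat{z}_t^{(k)}$ is a constant for $k\ge K$, so point $t$ can be accepted at most once more after $K$ (after that single update its gap is zero and it is no longer a candidate). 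Summing the finitely many per-index repair counts over $t=1,\dots,n$ gives finite termination directly---no potential function, no boundedness assumption, no compactness.

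Your two proposed routes both have genuine obstacles that you flag but do not resolve. The $\ell^{2}$ potential can strictly increase under an accepted update (take $z_t^{(k)}=0$ and $\hat{z}_t^{(k)}\neq 0$), so the hoped-for decrement $\Phi(z^{(k+1)})\le\Phi(z^{(k)})-c\tau^{2}$ is false in general, and there is no evident weighting by ``the spectral structure of $\phi$'' that repairs this, since the static-$\phi$ hypothesis places no constraint whatsoever on $\phi$. For the compactness route you first need boundedness of the orbit, which you assert ``follows from the fixed labeled values and the fixed linear map $\phi$'' but which is not obvious without already using the causal induction; and even granting boundedness, a subsequential limit $z^{*}$ at which all gaps are $\le\tau$ does not by itself contradict infinitely many gaps exceeding $\tau$ along the full sequence. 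The paper's causal induction sidesteps all of this with a one-paragraph argument.
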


\begin{proof}
Let $t$ be the point repaired in iteration $k$. 
We show that $\mathit{y}_t^{(k+i)}=\mathit{y}_t^{(k+1)}$ converges (will not change in the following iterations $k+i$), 
if the $\mathit{p}$ preceding points of $t$ are converged (no longer change). 
The reason is that $\hat{\mathit{y}_t}^{(k+i)}=\mathit{y}_t^{(k+1)}$,
referring to Equation \ref{equation:candidate} of candidate generation 
with static parameter $\phi$.
That is, no repair candidate will be generated.

If one of the $\mathit{p}$ preceding points of $t$ changes (not converged), say $t'$, 
the aforesaid derivation applies similarly to $t'$. 
By recursively applying the derivations, we can finally find a point $t^*$ whose $\mathit{p}$ preceding points are either labeled or converged, i.e., a new point $t^*$ converges. 
Such $t^*$ always exists, i.e., $\mathit{p}+1$, given that the first $\mathit{p}$ points in the sequence will not be modified by the repairing 
(referring to the repair candidate generation in Equation \ref{equation:candidate}) 
and thus already converged under the static parameter $\phi$.

Finally, all the data points gradually converge. 
\end{proof}

\subsubsection*{Special Case of IMR(1)}

We show in the following that
for $\mathit{p}=1$,  
the ARX($\mathit{p}$)-based repairing in Section \ref{sect:ARX} 
is a special case of our proposed IMR($\mathit{p}$) with static parameter $\phi^{(0)}$.
This equivalence demonstrates the rationale of our proposal.

\begin{proposition}
\label{proposition:fixed-phi-one-round}
For IMR(1) with static parameter $\phi^{(k)} = \phi, \forall k$, 
Algorithm \ref{algorithm:arx-iteration} 
is equivalent to 
ARX(1)-based repairing.
\end{proposition}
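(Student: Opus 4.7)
The plan is to show both methods produce the same repaired sequence by exhibiting a common fixed-point characterization and proving it is unique given the labels and $\phi$. Specifically, I would (i) argue that any sequence satisfying the ARX(1) recurrence $y_t = x_t + \phi(y_{t-1} - x_{t-1})$ at every unlabeled position is a fixed point of the IMR(1) iteration under static $\phi$, (ii) invoke Proposition \ref{proposition:mean-converge-fixed} to conclude IMR(1) does converge to such a fixed point, and (iii) show the fixed point is unique, so it must coincide with the one-pass ARX(1) output defined in Section \ref{sect:ARX}.

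The first step characterizes the converged sequence. At convergence of IMR(1), no candidate $\hat{\mathit{y}}_t^{(k)}$ can exceed the tolerance $|\hat{\mathit{y}}_t^{(k)} - \mathit{y}_t^{(k)}| > \tau$, since otherwise the repair evaluation step would produce a change, contradicting Equation \ref{equation:convergence}. Taking $\tau$ as the convergence tolerance (or the limit $\tau \to 0$), this forces every unlabeled $\mathit{y}_t^*$ to satisfy
\begin{equation*}
\mathit{y}_t^* = \phi\,(\mathit{y}_{t-1}^* - \mathit{x}_{t-1}) + \mathit{x}_t,
\end{equation*}
which is precisely the ARX(1) relation from Equation \ref{equation:ARX} with $\mathit{p}=1$.

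Next, I would establish uniqueness by induction on $t$. Labeled positions are held fixed in both algorithms. As noted in the proof of Proposition \ref{proposition:mean-converge-fixed}, the first $\mathit{p}=1$ point is never repaired and serves as an anchor. Given $\mathit{y}_{t-1}^*$, the recurrence pins down $\mathit{y}_t^*$ uniquely. Since ARX(1)-based repairing in Section \ref{sect:ARX} performs exactly this left-to-right recurrence using the already-determined $\mathit{y}_{t-1}$, the two sequences agree at every $t$.

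The main subtlety — the step I would handle most carefully — is reconciling the two different selection rules. IMR repairs a single point per iteration, namely the one minimizing $|\hat{\mathit{y}}_i^{(k)} - \mathit{x}_i|$ (Equation \ref{equation:minimum-solution}), so it may visit points in an order very different from ARX's left-to-right sweep. One has to verify that no matter which order IMR chooses, the result is the same fixed point. The key observation is that under static $\phi$ and the initialization $\mathit{y}_t^{(0)} = \mathit{x}_t$ at unlabeled positions, the difference $\mathit{y}_t^{(0)} - \mathit{x}_t$ is zero everywhere except at labeled points, so by Equation \ref{equation:candidate} a nonzero candidate can first arise only immediately to the right of a labeled point, and the influence of each labeled value propagates rightward through the recurrence. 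Once $\mathit{y}_{t-1}$ stabilizes, $\mathit{y}_t$ is uniquely determined regardless of when IMR touches it, so both algorithms necessarily land on the unique sequence satisfying the ARX(1) recurrence, establishing the equivalence.
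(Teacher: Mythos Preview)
Your proposal is correct and follows essentially the same approach as the paper's proof: both invoke Proposition~\ref{proposition:mean-converge-fixed} for convergence and then argue by left-to-right induction that each unlabeled $\mathit{y}_t$ converges to $\phi(\mathit{y}_{t-1}-\mathit{x}_{t-1})+\mathit{x}_t$, which is exactly the ARX(1) update. Your fixed-point framing is a mild repackaging of this induction; your step~(iii) and the paper's recursive computation of $\mathit{y}_2,\mathit{y}_3,\dots$ are the same argument. One difference worth noting: you explicitly address the order-of-repairs subtlety (that IMR may visit points in a non-left-to-right order yet still lands on the same limit because nonzero candidates only appear to the right of already-determined points), whereas the paper's proof leaves this implicit; your treatment there is a genuine clarification, though the underlying idea---each $\mathit{y}_t$ is pinned down once $\mathit{y}_{t-1}$ stabilizes---is shared.
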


\begin{proof}
Suppose that the first point is labeled. 
Consider data point 2. 
Referring to Proposition \ref{proposition:mean-converge-fixed}, 
$\mathit{y}_2^{(k_2)}$ becomes converged in some iteration $k_2$, 
denoted by $\mathit{y}_2=\phi_1(\mathit{y}_1-\mathit{x}_1)+\mathit{x}_2$, 
where $\mathit{y}_1=\mathit{y}_1^{(0)}$ is labeled. 

Similarly, given the converged $\mathit{y}_2$, 
$\mathit{y}_3^{(k_3)}$ will converge in some iteration $k_3$, 
denoted by $\mathit{y}_3=\phi_1(\mathit{y}_2-\mathit{x}_2)+\mathit{x}_3$. 

By recursively obtaining the converged  $\mathit{y}_t$ which is not labeled, 
it is exactly the procedure of ARX(1)-based repairing. 
\end{proof}

\subsection{Converged Parameter}
\label{sect:parameter-converged}

We now consider the dynamically updated parameter $\phi^{(k)}$ in each iteration, 
in Line \ref{algorithm:repair-parameter} in Algorithm \ref{algorithm:arx-iteration}. 
As shown in the following Proposition \ref{proposition:phi-y-converge}, if the dynamic parameter converges, 
the repair converges as well. 
This converged parameter case is interesting, since 
the corresponding converged repair result could be directly calculated without iterative computing in certain cases
as illustrated below. 

\begin{proposition}
\label{proposition:phi-y-converge}
If the parameter converges, 
$\lim_{k\rightarrow+\infty} \phi^{(k)} = \phi,$ 
then the repair also converges
$$\lim_{k\rightarrow+\infty} \sum_{i=1}^n \left(
\mathit{y}_i^{(k+1)} - \mathit{y}_i^{(k)}
\right)
 = 0.$$
\end{proposition}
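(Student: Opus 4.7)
The plan is to mirror the inductive structure of the proof of Proposition~\ref{proposition:mean-converge-fixed}, but upgrade it to tolerate a slowly drifting parameter. Since $\phi^{(k)} \to \phi$, the late-iteration candidate formula (Equation~\ref{equation:candidate}) is a continuous perturbation of the static-$\phi$ formula, and I would show by induction on position $t$ that each coordinate $\mathit{y}_t^{(k)}$ has a well-defined limit as $k \to \infty$. Pointwise convergence of $\mathit{y}^{(k)}$ is strictly stronger than the statement of the proposition, so once it is established, the conclusion on the signed sum follows immediately.

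For the base case, observe that Equation~\ref{equation:candidate} only generates candidates for $t > \mathit{p}$, so for $t \leq \mathit{p}$ the value $\mathit{y}_t^{(k)}$ equals $\mathit{y}_t^{(0)}$ throughout the run and is trivially convergent. For the inductive step at position $t > \mathit{p}$, suppose that each of $\mathit{y}_{t-1}^{(k)}, \ldots, \mathit{y}_{t-\mathit{p}}^{(k)}$ converges as $k \to \infty$. Substituting into Equation~\ref{equation:candidate} and using the assumed convergence of $\phi^{(k)}$, the candidate $\hat{\mathit{y}}_t^{(k)}$ is a jointly continuous function of converging inputs and hence converges to a limit which I denote $\mathit{y}_t^{\star}$.

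The step from candidate convergence to sequence convergence requires a short case split on the update rule (Equation~\ref{equation:minimum-solution}). Because at most one coordinate is updated per iteration, $\mathit{y}_t^{(k)}$ is piecewise constant, and after iteration $0$ the only values it ever takes are the candidates $\hat{\mathit{y}}_t^{(k_j)}$ at the iterations $k_1 < k_2 < \cdots$ when $t$ is the $\argmin$. If this index set is finite, $\mathit{y}_t^{(k)}$ is eventually constant and trivially convergent. If it is infinite, $\mathit{y}_t^{(k)}$ equals $\hat{\mathit{y}}_t^{(k_j)}$ on each plateau $[k_j, k_{j+1})$; since $\hat{\mathit{y}}_t^{(k_j)} \to \mathit{y}_t^{\star}$ along the subsequence, the full piecewise-constant sequence also converges to $\mathit{y}_t^{\star}$. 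Once every coordinate converges, each per-iteration difference $\mathit{y}_i^{(k+1)} - \mathit{y}_i^{(k)}$ vanishes and so does their sum.

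The main obstacle I anticipate is keeping the induction honest in the presence of the $\argmin$ coupling across positions: globally the update rule entangles every coordinate, so one worries that convergence at $t-1$ might implicitly depend on later behaviour at $t$. I plan to side-step this by exploiting the one-sided nature of the ARX dependence in Equation~\ref{equation:candidate} — the candidate at $t$ depends only on strictly earlier positions and on $\phi^{(k)}$ — so a sequential induction $t = \mathit{p}+1, \mathit{p}+2, \ldots, n$ respects causality and the coupling does not propagate backwards. A secondary subtlety is that the induction as stated relies on the hypothesis $\phi^{(k)} \to \phi$ rather than on any explicit rate; fortunately the argument above only uses continuity and piecewise-constant plateaus, so no rate is needed.
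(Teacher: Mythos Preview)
Your proposal is correct and follows essentially the same inductive-on-position strategy that the paper (implicitly) intends, exploiting the one-sided causal structure of Equation~\ref{equation:candidate}. The paper's own proof is a two-line reduction to Proposition~\ref{proposition:mean-converge-fixed} (``when the parameter $\phi$ converges, it becomes static; similar proof applies''), which glosses over precisely the continuity argument and the finite/infinite-update case split that you supply; your version is the more rigorous of the two.
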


\begin{proof}
When the parameter $\phi$ converges, it becomes static. 
Similar proof for Proposition \ref{proposition:mean-converge-fixed} applies.
\end{proof}

\subsubsection{Special Case of IMR(1)}
\label{sect:special-converge-1}

Again, we consider the special case of IMR(1) with order $\mathit{p}=1$. 
To show how the repair results could be directly computed without iterations, we first illustrate that any $\mathit{y}_t^{(k)}$ generated during Algorithm \ref{algorithm:arx-iteration} can be represented as follows, 
a.k.a. provenance of $\mathit{y}_t^{(k)}$. 

\begin{lemma}
\label{proposition:provenance-ar1}
For IMR(1), we can represent each $\mathit{y}_t^{(k)}$ by 
$$\mathit{y}_t^{(k)} = \phi_1^{(k_s)}\phi_1^{(k_{s-1})}\dots\phi_1^{(k_1)}(\mathit{y}_{t-s}^{(0)}-\mathit{x}_{t-s}) + \mathit{x}_t,$$
where $0<k_1<\dots<k_{s-1}<k_s<k$ denote iteration numbers, 
$\mathit{y}_{t-s}^{(0)}$ is labeled truth, 
and 
time points $t-s+1, \dots, t$ are not labeled. 
\end{lemma}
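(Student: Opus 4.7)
The plan is to prove this by strong induction on the iteration index $k$, exploiting the IMR(1) candidate-generation rule $\hat{\mathit{y}}_t^{(k)} = \phi_1^{(k)}(\mathit{y}_{t-1}^{(k)} - \mathit{x}_{t-1}) + \mathit{x}_t$ from Equation~\ref{equation:candidate} together with the fact that at most one point is modified per iteration by the minimum-change rule in Equation~\ref{equation:minimum-solution}. The base case is immediate: if $t$ is labeled, take $s=0$ so that the empty product collapses the expression to $\mathit{y}_t^{(0)}$; if $t$ is unlabeled and has not yet been touched by any repair, then $\mathit{y}_t^{(k)} = \mathit{x}_t$, and the representation will only need to be witnessed once a genuine repair of $\mathit{y}_t$ actually takes place.

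For the inductive step, I would assume the representation holds for every $\mathit{y}_{t'}^{(k')}$ with $k' \leq k$ and examine $\mathit{y}_t^{(k+1)}$. If $t$ is not the argmin chosen in iteration $k$, then $\mathit{y}_t^{(k+1)} = \mathit{y}_t^{(k)}$ and the induction hypothesis applies verbatim. Otherwise the update writes $\mathit{y}_t^{(k+1)} = \phi_1^{(k)}(\mathit{y}_{t-1}^{(k)} - \mathit{x}_{t-1}) + \mathit{x}_t$, and I would split on whether $t-1$ is labeled. If so, $\mathit{y}_{t-1}^{(k)} = \mathit{y}_{t-1}^{(0)}$ and the claim holds with $s=1$ and $k_1 = k$. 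If $t-1$ is unlabeled, the hypothesis gives
\begin{equation*}
\mathit{y}_{t-1}^{(k)} - \mathit{x}_{t-1} = \phi_1^{(k_s)} \cdots \phi_1^{(k_1)} \bigl(\mathit{y}_{(t-1)-s}^{(0)} - \mathit{x}_{(t-1)-s}\bigr),
\end{equation*}
with $\mathit{y}_{(t-1)-s}^{(0)}$ labeled and $t-s,\dots,t-1$ all unlabeled. Substituting into the candidate rule merely prepends one more factor $\phi_1^{(k)}$ to the chain, extending the index list to $k_1 < \cdots < k_s < k$, and relocates the labeled anchor to position $t-s'$ with $s'=s+1$; the required unlabeled window $t-s'+1,\dots,t$ is inherited from the window $t-s,\dots,t-1$ of the hypothesis combined with the standing assumption that $t$ itself is unlabeled.

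The main subtlety I anticipate is the degenerate configuration in which $t-1$ is unlabeled and has never been repaired up through iteration $k$, so $\mathit{y}_{t-1}^{(k)} = \mathit{x}_{t-1}$. In that case the candidate collapses to $\hat{\mathit{y}}_t^{(k)} = \mathit{x}_t$, which coincides with the current $\mathit{y}_t^{(k)}$, so $|\hat{\mathit{y}}_t^{(k)} - \mathit{y}_t^{(k)}| = 0 \leq \tau$ and $t$ is not eligible to be selected as the minimum-change argmin repair at this step. Consequently the second case of the induction never fires in this configuration, which guarantees that whenever a genuine update to $\mathit{y}_t$ happens there truly exists a labeled preceding anchor $\mathit{y}_{t-s}^{(0)}$ reachable by unwinding the recursion; making this observation explicit is what closes the argument and justifies the lemma's insistence that the leftmost term in the product is a labeled value.
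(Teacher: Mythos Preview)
Your proof is correct and takes essentially the same approach as the paper: both unwind the IMR(1) update rule $\mathit{y}_t^{(k_s+1)} = \phi_1^{(k_s)}(\mathit{y}_{t-1}^{(k_s)} - \mathit{x}_{t-1}) + \mathit{x}_t$ back through the chain of preceding points until a labeled anchor is reached, with the paper presenting this as a direct trace-back and you packaging it as strong induction on $k$. Your explicit treatment of the degenerate case where $t-1$ is unlabeled and untouched (so that $t$ cannot be selected for repair) is a welcome refinement that the paper's proof leaves implicit.
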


\begin{proof}
Referring to Equation \ref{equation:minimum-solution}, $\mathit{y}_t^{(k)}$ in $k$-th iteration is either unchanged (equal to $\mathit{y}_t^{(k-1)}$)
or determined by $\hat{\mathit{y}}_t^{(k-1)}=\phi_1^{(k-1)}(\mathit{y}_{t-1}^{(k-1)}-\mathit{x}_{t-1}) + \mathit{x}_t$ referring to Equation \ref{equation:candidate} with $\mathit{p}=1$.

Suppose that $\mathit{y}_t^{(k)}$ is unchanged since iteration $k_s$.
We have 
\begin{align*}
\mathit{y}_t^{(k)} = \mathit{y}_t^{(k-1)} = \ldots = \mathit{y}_t^{(k_s+1)} 
= \phi_1^{(k_s)}(\mathit{y}_{t-1}^{(k_s)}-\mathit{x}_{t-1}) + \mathit{x}_t 
.
\end{align*}
Similar provenance applies to $\mathit{y}_{t-1}^{(k_s)}, \dots, \mathit{y}_{t-1}^{(k_1)}$, 
\begin{align*}
\mathit{y}_{t-1}^{(k_s)} 
= & \dots = \mathit{y}_{t-1}^{(k_{s-1}+1)} 
= \phi_1^{(k_{s-1})}(\mathit{y}_{t-2}^{(k_{s-1})}-\mathit{x}_{t-2}) + \mathit{x}_{t-1} 
, \\
& \dots ,\\
\mathit{y}_{t-s+1}^{(k_2)} 
= & \dots = \mathit{y}_{t-s+1}^{(k_1+1)} 
= \phi_1^{(k_1)}(\mathit{y}_{t-s}^{(k_1)}-\mathit{x}_{t-s}) + \mathit{x}_{t-s+1} 
, \\
\mathit{y}_{t-s}^{(k_1)} 
= & \dots = \mathit{y}_{t-s}^{(0)}
.
\end{align*}
Combining these $s+1$ derivations, the conclusion is proved. 
\end{proof}

We denote the labeled points in $\mathit{y}^{(0)}$ by multiple (say $m$) segments. 
Let $s(j)$ and $e(j)$ denote the start and end point of the $j$-th labeled segment, $j=1,\dots,m$.
For instance, there are 3 segments of labeled data points in Figure \ref{fig:example-dirty}, 
having $
s(1)=1, e(1)=2, 
s(2)=4, e(2)=5, 
s(3)=9, e(3)=9.
$

Let $\mathit{z}_i = \mathit{y}_i^{(0)} - \mathit{x}_i$ for all labeled points $i$.
(We set $\mathit{z}_{e(0)}=0$ over undefined segment 0.)

\begin{proposition}
\label{proposition:dynamic-ar1-general}
For IMR(1), if the parameter converges, having 
$\lim_{k\rightarrow+\infty} \phi_1^{(k)} = \phi_1,$
then the converged repair result can be directly given by
$$\lim_{k\rightarrow+\infty}\mathit{y}_i^{(k)} = \mathit{y}_i, $$ 
where
\begin{align}
\label{equation:y-converge-mulitple}
\mathit{y}_i =  
\begin{cases}
\mathit{y}_i^{(0)}
&\mathrm{if~} i \in [s(j),e(j)] \\
\phi_1^{i-e(j)}(\mathit{y}_{e(j)}^{(0)} - \mathit{x}_{e(j)}) + \mathit{x}_i
&\mathrm{if~} i \in (e(j), s(j+1))
\end{cases} 
\end{align}
and the converged parameter $\phi_1$ is a solution to 
\begin{align}
\label{equation:parameter-converge-mulitple}
\phi_1 = \frac{\sum\limits_{j=1}^{m}\left(
\phi_1^{s(j)-1-e(j-1)} \mathit{z}_{e(j-1)} \mathit{z}_{s(j)}
+ 
\sum\limits_{i=s(j)}^{e(j)-1}\mathit{z}_{i}\mathit{z}_{i+1} 
\right)
}{
\sum\limits_{j=1}^{m}\left(
(\phi_1^{s(j)-1-e(j-1)} \mathit{z}_{e(j-1)})^2
+ 
\sum\limits_{i=s(j)}^{e(j)-1}(\mathit{z}_{i})^2 
\right)
}.
\end{align}
\end{proposition}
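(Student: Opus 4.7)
The plan is to decouple the statement into two claims: (i) the closed-form for the converged repair values (Equation \ref{equation:y-converge-mulitple}), and (ii) the fixed-point equation satisfied by the limiting parameter $\phi_1$ (Equation \ref{equation:parameter-converge-mulitple}). Claim (i) follows by invoking Lemma \ref{proposition:provenance-ar1} in the limit. Claim (ii) follows by taking the limit of the OLS estimator in Equation \ref{equation:ols} and performing a case analysis on consecutive index pairs $(i, i+1)$.

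For claim (i), fix an unlabeled point $i \in (e(j), s(j+1))$. Lemma \ref{proposition:provenance-ar1} writes $\mathit{y}_i^{(k)}$, after its most recent update, as a product of $i - e(j)$ parameter factors $\phi_1^{(k_s)} \cdots \phi_1^{(k_1)}$ times $z_{e(j)}$, plus $\mathit{x}_i$, where the anchoring labeled predecessor is $e(j)$. Since $\phi_1^{(k)} \to \phi_1$ and Proposition \ref{proposition:phi-y-converge} already guarantees that $\mathit{y}_i^{(k)}$ itself converges, every parameter factor in the product approaches $\phi_1$ as $k \to \infty$, so the limit becomes $\phi_1^{i - e(j)} z_{e(j)} + \mathit{x}_i$. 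The boundary convention $z_{e(0)} = 0$ handles unlabeled points preceding segment 1, where $d_i$ starts and stays at zero, and labeled points are unchanged by construction; hence Equation \ref{equation:y-converge-mulitple} holds in all cases.

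For claim (ii), let $d_i = \lim_k (\mathit{y}_i^{(k)} - \mathit{x}_i)$. For $\mathit{p} = 1$ the OLS estimator in Equation \ref{equation:ols} collapses to $\phi_1^{(k)} = \sum_{i=1}^{n-1} d_i^{(k)} d_{i+1}^{(k)} / \sum_{i=1}^{n-1} (d_i^{(k)})^2$, so taking the limit yields the normal equation $\phi_1 \sum_i d_i^2 = \sum_i d_i d_{i+1}$. I partition $i = 1, \ldots, n-1$ into four mutually exclusive cases on the pair $(i, i+1)$: (A) both unlabeled, (B) $i$ labeled and $i+1$ unlabeled, (C) $i$ unlabeled and $i+1$ labeled, and (D) both labeled. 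From claim (i), $d_{i+1} = \phi_1 d_i$ in cases (A) and (B), so the summands $d_i d_{i+1} = \phi_1 d_i^2$ on the right cancel against the matching $\phi_1 d_i^2$ contributions on the left. What remains is $\phi_1 [\sum_{(C)} d_i^2 + \sum_{(D)} d_i^2] = \sum_{(C)} d_i d_{i+1} + \sum_{(D)} d_i d_{i+1}$. Substituting $d_{s(j)-1} = \phi_1^{s(j)-1-e(j-1)} z_{e(j-1)}$ and $d_{s(j)} = z_{s(j)}$ for case (C) (re-indexed by the target segment $j$), and $d_i = z_i$, $d_{i+1} = z_{i+1}$ for case (D), and then dividing, recovers Equation \ref{equation:parameter-converge-mulitple} term-by-term; the $j = 1$ summand from the (C) contribution vanishes by the $z_{e(0)} = 0$ convention.

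The main obstacle is the bookkeeping in the case analysis for claim (ii): one must confirm that cases (A) and (B) (including any unlabeled tail beyond $e(m)$, which is entirely of type (A)) satisfy the recursion $d_{i+1} = \phi_1 d_i$ cleanly, so that only the unlabeled-to-labeled transitions (C) and the intra-segment products (D) survive after cancellation. A subtler point in claim (i) is verifying that the iteration subscripts $k_1 < \cdots < k_s$ in Lemma \ref{proposition:provenance-ar1} all grow to infinity with $k$, so that every $\phi_1^{(k_\ell)}$ factor converges to $\phi_1$ rather than stalling at a fixed iteration; this is delivered by combining parameter convergence with Proposition \ref{proposition:phi-y-converge}, since once both $\phi_1^{(k)}$ and $\mathit{y}_i^{(k)}$ have settled within $\tau$, the fixed-point relation $d_{i+1} = \phi_1 d_i$ must hold in the limit regardless of whether further updates actually occur.
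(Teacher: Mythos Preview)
Your proposal is correct and follows essentially the same route as the paper's own proof: invoke Proposition~\ref{proposition:phi-y-converge} and Lemma~\ref{proposition:provenance-ar1} to obtain the closed form \eqref{equation:y-converge-mulitple}, then take the limit of the OLS ratio \eqref{equation:ols} for $p=1$ and cancel the terms where $d_{i+1}=\phi_1 d_i$ to arrive at \eqref{equation:parameter-converge-mulitple}. The paper carries out the cancellation by writing the numerator and denominator as nested index-range sums and simplifying a chain of fraction equalities, whereas you organize the same cancellation via the four-case partition (A)--(D) on the pair $(i,i{+}1)$; these are the same argument in different bookkeeping. Your remark on the iteration subscripts $k_1<\cdots<k_s$ needing to tend to infinity is a point the paper glosses over, and your fixed-point justification for it is a welcome addition.
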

\begin{proof} 
Given $\lim\limits_{k\rightarrow+\infty} \phi_1^{(k)} = \phi_1$,
according to Proposition \ref{proposition:phi-y-converge}, the result $\mathit{y}^{(k)}$ will converge, having $\lim\limits_{k\rightarrow+\infty}\mathit{y}_i^{(k)} = \mathit{y}_i$.

Referring to Lemma \ref{proposition:provenance-ar1},
the converged value has its provenance from the nearest preceding labeled point, i.e., $\mathit{y}_i = \phi_1^{i-e(j)}(\mathit{y}_{e(j)}^{(0)} - \mathit{x}_{e(j)}) + \mathit{x}_i$, for $e(j) < i < s(j+1)$.

Let $\mathit{z}_t = \mathit{y}_t - \mathit{x}_t$ for all unlabeled points $t$.
Referring to Equation \ref{equation:ols} of parameter estimation, we have
\begin{align*}
\phi_1 &= \frac{
\mathit{z}_1\mathit{z}_2
+\ldots+
\mathit{z}_{n-1}\mathit{z}_{n}
}{
(\mathit{z}_1)^2
+\ldots+
(\mathit{z}_{n-1})^2
}. 
\end{align*}

Similar to the aforesaid $\mathit{y}_i$, we have  
$\mathit{y}_{i+1}= \phi_1^{i-e(j)+1}(\mathit{y}_{e(j)}^{(0)} - \mathit{x}_{e(j)}) + \mathit{x}_{i+1}$, for $e(j) < i+1 < s(j+1)$.
It follows 
$\mathit{z}_{i+1}=\phi_1\mathit{z}_i.$

By applying $\mathit{z}_{i+1}=\phi_1\mathit{z}_i$ for all $e(j) \leq i < s(j+1)-1, j \in [1, m]$, we have 
\begin{align*}
\phi_1 &= \frac{
\mathit{z}_1\mathit{z}_2
+\ldots+
\mathit{z}_{n-1}\mathit{z}_{n}
}{
(\mathit{z}_1)^2
+\ldots+
(\mathit{z}_{n-1})^2
} \\
 &= \frac{
\sum\limits_{j=1}^{m}\left(
\sum\limits_{i=e(j-1)}^{s(j)-1}\mathit{z}_{i}\mathit{z}_{i+1} 
+ 
\sum\limits_{i=s(j)}^{e(j)-1}\mathit{z}_{i}\mathit{z}_{i+1} 
\right)
}{
\sum\limits_{j=1}^{m}
\left(
\sum\limits_{i=e(j-1)}^{s(j)-1}(\mathit{z}_{i})^2
+ 
\sum\limits_{i=s(j)}^{e(j)-1}(\mathit{z}_{i})^2
\right)
} \\
 &= \frac{
\sum\limits_{j=1}^{m}\left(
\sum\limits_{i=e(j-1)}^{s(j)-2}\phi_1(\mathit{z}_{i})^2 
+ 
\mathit{z}_{s(j)-1} \mathit{z}_{s(j)}
+
\sum\limits_{i=s(j)}^{e(j)-1}\mathit{z}_{i}\mathit{z}_{i+1} 
\right)
}{
\sum\limits_{j=1}^{m}
\left(
\sum\limits_{i=e(j-1)}^{s(j)-1}(\mathit{z}_{i})^2
+ 
\sum\limits_{i=s(j)}^{e(j)-1}(\mathit{z}_{i})^2
\right)
} \\
 &= \frac{
\sum\limits_{j=1}^{m}\left(
\mathit{z}_{s(j)-1} \mathit{z}_{s(j)}
+ 
\sum\limits_{i=s(j)}^{e(j)-1}\mathit{z}_{i}\mathit{z}_{i+1} 
\right)
}{
\sum\limits_{j=1}^{m}
\left(
(\mathit{z}_{s(j)-1})^2 
+ \sum\limits_{i=s(j)}^{e(j)-1}(\mathit{z}_{i})^2
\right)
} 
\end{align*}

By applying Equation \ref{equation:y-converge-mulitple} again, i.e., 
$$\mathit{z}_{s(j)-1}=\phi_1^{s(j)-1-e(j-1)} \mathit{z}_{e(j-1)},$$ 
the conclusion in Equation \ref{equation:parameter-converge-mulitple} is proved.
\end{proof}

\subsubsection{IMR(1) with One Labeled Segment}
\label{sect:special-one-label}

We consider the case that only one segment with length $\ell$ is labeled at the beginning of $\mathit{y}^{(0)}$, 
i.e., $\mathit{y}^{(0)}_1, \mathit{y}^{(0)}_2,  \dots,  \mathit{y}^{(0)}_\ell$ are labeled. 
In this special case, the converged parameter and repair result can be directly calculated  without iterating, 
and most importantly 
it enables efficient online computation, 
by interpreting all the historical data as one segment labeled (see Section \ref{sect:online-exp} for details and evaluation). 
Remarkably, no threshold needs to be set in this case.

The idea is: 
(1) We first show in Lemma \ref{lemma:bound-condition} that under certain inputs, the estimated parameter in each iteration is indeed bounded;
(2) Proposition \ref{proposition:mean-converge-dynamic} then illustrates that the bounded parameter leads to converged parameter;
(3) Finally, analogous to Proposition \ref{proposition:dynamic-ar1-general}, given the converged parameter, 
we directly calculate the converged repair without iterative computing in Proposition \ref{proposition:dynamic-ar1-one}.

\begin{lemma}
\label{lemma:bound-condition}
For IMR(1) with first $\ell$ data points labeled in $\mathit{y}^{(0)}$.
If the input satisfies
$\left|\sum\limits_{t=1}^{\ell-1}\mathit{z}_{t}^{(0)}\mathit{z}_{t+1}^{(0)}\right|
< \sum\limits_{t=1}^{\ell-1}\mathit{z}_t^{(0)}\mathit{z}_t^{(0)}$, i.e., 
$$\left|\sum\limits_{t=1}^{\ell-1}(\mathit{y}_{t}^{(0)}-\mathit{x}_{t})(\mathit{y}_{t+1}^{(0)}-\mathit{x}_{t+1})\right|
< \sum\limits_{t=1}^{\ell-1}(\mathit{y}_t^{(0)}-\mathit{x}_t)^2,$$
then we have $|\phi_1^{(k)}| < 1$ in the iterations $k, 0 \leq k \leq n-\ell$.
\end{lemma}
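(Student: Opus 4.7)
The plan is to establish the bound via a joint induction on $k$, simultaneously proving two claims: (i) the first $k$ iterations have repaired exactly the points $\ell+1,\ldots,\ell+k$ in that left-to-right order, and (ii) $|\phi_1^{(k)}| < 1$. Writing $N_0 := \sum_{t=1}^{\ell-1}\mathit{z}_t^{(0)}\mathit{z}_{t+1}^{(0)}$ and $D_0 := \sum_{t=1}^{\ell-1}(\mathit{z}_t^{(0)})^2$, the lemma's hypothesis is exactly $|N_0| < D_0$. The base case $k = 0$ is immediate: no repair has occurred, so (i) is vacuous, and $\phi_1^{(0)} = N_0 / (D_0 + (\mathit{z}_\ell^{(0)})^2)$ yields $|\phi_1^{(0)}| \leq |N_0|/D_0 < 1$.

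For the inductive step, I would first derive a closed form for the residuals. Under claim (i), the repair rule $\mathit{z}_{t+1}^{(k+1)} = \phi_1^{(k)}\,\mathit{z}_t^{(k)}$ (Equation~\ref{equation:candidate} specialized to $\mathit{p}=1$) unrolls into $\mathit{z}_{\ell+j}^{(k)} = b_j\,\mathit{z}_\ell^{(0)}$ for $1 \leq j \leq k$, where $b_j := \prod_{i=0}^{j-1}\phi_1^{(i)}$ and $b_0 = 1$; all other unlabeled positions still carry $\mathit{z}^{(k)}_t = 0$. Substituting into the OLS formula (Equation~\ref{equation:ols}) collapses it to
\[
\phi_1^{(k)} \;=\; \frac{N_0 + (\mathit{z}_\ell^{(0)})^2\,\sum_{j=0}^{k-1} b_j\,b_{j+1}}{D_0 + (\mathit{z}_\ell^{(0)})^2\,\sum_{j=0}^{k} b_j^2}.
\]

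The decisive estimate is the AM-GM inequality $2|b_j\,b_{j+1}| \leq b_j^2 + b_{j+1}^2$, applied term by term. Summing over $0 \leq j \leq k-1$ produces $|\sum_j b_j\,b_{j+1}| \leq \sum_{j=0}^{k} b_j^2 - (1+b_k^2)/2 < \sum_{j=0}^{k} b_j^2$, where the strictness comes from the endpoint $b_0 = 1$ contributing an irreducible half-unit of slack. Combined with $|N_0| < D_0$, this forces $|N^{(k)}| < D^{(k)}$, hence $|\phi_1^{(k)}| < 1$, establishing claim~(ii) at step~$k$. To close the induction I would then use the freshly proved $|\phi_1^{(i)}|<1$ for $i \leq k$ to observe that $|b_j|$ is non-increasing in $j$, so the candidate deviations $|\hat{\mathit{y}}_{\ell+j}^{(k)} - \mathit{x}_{\ell+j}| = |\phi_1^{(k)}\,b_{j-1}\,\mathit{z}_\ell^{(0)}|$ are smallest at the largest valid $j$. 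The minimum-change rule (Equation~\ref{equation:minimum-solution}) therefore selects $t^* = \ell + k + 1$ as the next repair, extending claim~(i) to step $k+1$.

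The delicate part is the coupling between (i) and (ii): the closed-form substitution presumes the sequential repair order, but that order is itself guaranteed only once $|b_j|$ is known to shrink, which in turn requires the parameter bound. Threading these two claims through a single induction is what makes the argument go through. A separate corner case $\mathit{z}_\ell^{(0)} = 0$ (where no candidate is ever generated beyond iteration~$0$ and the algorithm effectively halts) also needs a brief line: the ratio then reduces to $N_0/D_0$, and the bound again follows from the hypothesis directly.
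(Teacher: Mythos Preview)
Your argument is correct and arrives at the same conclusion, but by a genuinely different route than the paper. Both proofs share the structural backbone of a joint induction that simultaneously tracks (a) the left-to-right repair order $\ell+1,\ell+2,\dots$ and (b) the bound $|\phi_1^{(k)}|<1$; and both derive essentially the same closed form for $\phi_1^{(k)}$ in terms of the partial products $b_j=\prod_{i<j}\phi_1^{(i)}$. Where they diverge is in how the bound on $\phi_1^{(k)}$ is extracted from that closed form. The paper splits into the three sign cases $\phi_1^{(0)}>0$, $<0$, $=0$ and, in the positive case, shows by a telescoping computation that $\phi_1^{(k)}$ is strictly increasing yet stays below~$1$ (with a geometric-series estimate to control the accumulated correction terms). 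Your AM--GM step $2|b_jb_{j+1}|\le b_j^2+b_{j+1}^2$, summed to yield $\bigl|\sum_{j}b_jb_{j+1}\bigr|\le \sum_{j}b_j^2-(1+b_k^2)/2$, handles all signs at once and avoids both the case split and the monotonicity claim entirely; the endpoint slack from $b_0=1$ is exactly what supplies strictness. This is cleaner and more direct, at the cost of not revealing the monotone structure that the paper exposes (which the paper does not actually use elsewhere). One minor point common to both arguments: the assertion that iteration $k$ selects $t^*=\ell+k+1$ tacitly assumes that this point is a valid candidate, i.e.\ that $|\phi_1^{(k)}b_k\mathit{z}_\ell^{(0)}|>\tau$; the paper glosses over this as well, so your treatment is on equal footing.
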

\begin{proof}

According to Equation \ref{equation:ols}, 
we have 
$$\phi_1^{(0)} = \frac{
\sum\limits_{t=1}^{\ell-1}\mathit{z}_{t}^{(0)}\mathit{z}_{t+1}^{(0)}
}{
(\mathit{z}_{\ell}^{(0)})^2+\sum\limits_{t=1}^{\ell-1}(\mathit{z}_{t}^{(0)})^2
}.$$ 
Referring to the given condition, it follows
$$
|\phi_1^{(0)}| \leq 
\frac{
|\sum\limits_{t=1}^{\ell-1}\mathit{z}_{t}^{(0)}\mathit{z}_{t+1}^{(0)}|
}{
\sum\limits_{t=1}^{\ell-1}(\mathit{z}_{t}^{(0)})^2
}
<1.$$

We prove $|\phi_1^{(k)}| < 1, 1 \leq k \leq n-\ell$ in three cases.

\paragraph*{Case 1: $\phi_1^{(0)} > 0$}

We will prove $0<\phi_1^{(k)}< 1, 1 \leq k \leq n-\ell$ by induction. 

\noindent\textbf{Basis:}
For $k = 1$, 
referring to the minimal change principle and the candidate generation in Section \ref{sect:candidate-generation}, 
we have $\argmin\limits_{i \in [1,n]}|\hat{\mathit{z}}_i^{(0)}|=\ell+1$ and  
$\mathit{z}_{\ell+1}^{(1)} = \hat{\mathit{z}}_{\ell+1}^{(0)} = \mathit{z}_{\ell}^{(0)}\phi_1^{(0)}$. Since only one point will be changed in each iteration, we have
$\mathit{z}_i^{(1)} = \mathit{z}_i^{(0)}, i \neq \ell+1$.
Referring to Equation \ref{equation:ols}, 
it follows
\begin{align*}
\phi_1^{(1)} &= \frac{
\mathit{z}_{\ell+1}^{(1)}\mathit{z}_{\ell}^{(1)} + \sum\limits_{t=1}^{\ell-1}\mathit{z}_{t}^{(0)}\mathit{z}_{t+1}^{(0)}
}{
(\mathit{z}_{\ell+1}^{(1)})^2 + \sum\limits_{t=1}^{\ell}(\mathit{z}_{t}^{(0)})^2
}
\end{align*}

The following derivation shows $\phi_1^{(1)} > 0$.
\begin{align*}
\phi_1^{(1)}-\phi_1^{(0)} &= 
\frac{
\mathit{z}_{\ell+1}^{(1)}\mathit{z}_{\ell}^{(1)} + \sum\limits_{t=1}^{\ell-1}\mathit{z}_{t}^{(0)}\mathit{z}_{t+1}^{(0)}
}{
(\mathit{z}_{\ell+1}^{(1)})^2 + \sum\limits_{t=1}^{\ell}(\mathit{z}_{t}^{(0)})^2
} - \frac{
\sum\limits_{t=1}^{\ell-1}\mathit{z}_{t}^{(0)}\mathit{z}_{t+1}^{(0)}
}{
\sum\limits_{t=1}^{\ell}(\mathit{z}_{t}^{(0)})^2
} \\
&= \frac{
\mathit{z}_{\ell+1}^{(1)}(\mathit{z}_{\ell}^{(1)}(\sum\limits_{t=1}^{\ell}(\mathit{z}_{t}^{(0)})^2)
-\mathit{z}_{\ell+1}^{(1)}(\sum\limits_{t=1}^{\ell-1}\mathit{z}_{t}^{(0)}\mathit{z}_{t+1}^{(0)}))
}{
(\sum\limits_{t=1}^{\ell}(\mathit{z}_{t}^{(0)})^2)((\mathit{z}_{\ell+1}^{(1)})^2 + \sum\limits_{t=1}^{\ell}(\mathit{z}_{t}^{(0)})^2)
} \\
&= \frac{
\phi_1^{(0)}(1-(\phi_1^{(0)})^2)(\mathit{z}_{\ell}^{(0)})^2
}{
(\mathit{z}_{\ell+1}^{(1)})^2 + \sum\limits_{t=1}^{\ell}(\mathit{z}_{t}^{(0)})^2
} > 0
\end{align*}

And it follows $\phi_1^{(1)} < 1$ below.
\begin{align*}
1-\phi_1^{(1)} &= \frac{
(\sum\limits_{t=1}^{\ell}(\mathit{z}_{t}^{(0)})^2-\sum\limits_{t=1}^{\ell-1}\mathit{z}_{t}^{(0)}\mathit{z}_{t+1}^{(0)}) 
+\mathit{z}_{\ell+1}^{(1)}(\mathit{z}_{\ell+1}^{(1)}-\mathit{z}_{\ell}^{(1)})
}{
(\mathit{z}_{\ell+1}^{(1)})^2 + \sum\limits_{t=1}^{\ell}(\mathit{z}_{t}^{(0)})^2
} \\
&= \frac{
(\sum\limits_{t=1}^{\ell}(\mathit{z}_{t}^{(0)})^2-\sum\limits_{t=1}^{\ell-1}\mathit{z}_{t}^{(0)}\mathit{z}_{t+1}^{(0)}) 
+(\mathit{z}_{\ell}^{(0)})^2(\phi_1^{(0)}(\phi_1^{(0)}-1))
}{
(\mathit{z}_{\ell+1}^{(1)})^2 + \sum\limits_{t=1}^{\ell}(\mathit{z}_{t}^{(0)})^2
} \\
&> \frac{
(\mathit{z}_{\ell}^{(0)})^2(1+(\phi_1^{(0)})^2 - \phi_1^{(0)})
}{
(\mathit{z}_{\ell+1}^{(1)})^2 + \sum\limits_{t=1}^{\ell}(\mathit{z}_{t}^{(0)})^2
} > 0
\end{align*}

Combining the aforesaid two derivations, we have $0 < \phi_1^{(0)} < \phi_1^{(1)} < 1$.

\noindent\textbf{Inductive step:}
We will show that if the conclusion holds $0 < \phi_1^{(k)} < 1$, 
for $k = i-1, 1 \leq i \leq n-\ell$, 
then also $0 < \phi_1^{(i)} < 1$ holds for $k=i$.

Following the IMR, we have
$\argmin\limits_{j \in [1,n]}|\hat{\mathit{z}}_j^{(i-1)}|
=\ell+i$ and  
$\mathit{z}_{\ell+i}^{(i)} = \hat{\mathit{z}}_{\ell+i}^{(i-1)} = \mathit{z}_{\ell+i-1}^{(i-1)}\phi_1^{(i-1)}$.

The following shows $\phi_1^{(i)} > 0$.
\begin{align*}
&\phi_1^{(i)}-\phi_1^{(i-1)} \allowdisplaybreaks \\ 
&= \frac{
\mathit{z}_{\ell+i}^{(i)}\mathit{z}_{\ell+i-1}^{(i)} + \sum\limits_{t=1}^{\ell+i-2}\mathit{z}_{t}^{(i-1)}\mathit{z}_{t+1}^{(i-1)}
}{
(\mathit{z}_{\ell+i}^{(i)})^2 + \sum\limits_{t=1}^{\ell+i-1}(\mathit{z}_{t}^{(i-1)})^2
} - \frac{
\sum\limits_{t=1}^{\ell+i-2}\mathit{z}_{t}^{(i-1)}\mathit{z}_{t+1}^{(i-1)}
}{
\sum\limits_{t=1}^{\ell+i-1}(\mathit{z}_{t}^{(i-1)})^2
} \allowdisplaybreaks \\ 
&= \frac{
\mathit{z}_{\ell+i}^{(i)}
(\mathit{z}_{\ell+i-1}^{(i)}(\sum\limits_{t=1}^{\ell+i-1}(\mathit{z}_{t}^{(i-1)})^2)
-\mathit{z}_{\ell+i}^{(i)}(\sum\limits_{t=1}^{\ell+i-2}\mathit{z}_{t}^{(i-1)}\mathit{z}_{t+1}^{(i-1)}))
}{
(\sum\limits_{t=1}^{\ell+i-1}(\mathit{z}_{t}^{(i-1)})^2)((\mathit{z}_{\ell+i}^{(i)})^2 + \sum\limits_{t=1}^{\ell+i-1}(\mathit{z}_{t}^{(i-1)})^2)
} \\
&= \frac{
\phi_1^{(i-1)}(1-(\phi_1^{(i-1)})^2)(\mathit{z}_{\ell+i-1}^{(i-1)})^2
}{
(\mathit{z}_{\ell+i}^{(i)})^2 + \sum\limits_{t=1}^{\ell+i-1}(\mathit{z}_{t}^{(i-1)})^2
} > 0
\end{align*}

It follows $\phi_1^{(i)} < 1$ as illustrated below.
\begin{align*}
& 
\left(1-\phi_1^{(i)}\right)\left( 
(\mathit{z}_{\ell+i}^{(i)})^2 + \sum\limits_{t=1}^{\ell+i-1}(\mathit{z}_{t}^{(i-1)})^2
\right) \\
&= (\sum\limits_{t=1}^{\ell}(\mathit{z}_{t}^{(0)})^2-\sum\limits_{t=1}^{\ell-1}\mathit{z}_{t}^{(0)}\mathit{z}_{t+1}^{(0)}) \\  
&+(\mathit{z}_{\ell}^{(0)})^2 \times 
\sum\limits_{s=1}^{i}\prod\limits_{j=0}^{s-2}(\phi_1^{(j)})^2\phi_1^{(s-1)}(\phi_1^{(s-1)}-1) \\
&> (\mathit{z}_{\ell}^{(0)})^2 \times
(1 + (\phi_1^{(i-1)}-1)(\sum\limits_{s=1}^{i}(\phi_1^{(i-1)})^{2s-1})) \\
&= (\mathit{z}_{\ell}^{(0)})^2 \times
(1+ (\phi_1^{(i-1)}-1)\frac{
\phi_1^{(i-1)}(1-(\phi_1^{(i-1)})^{2i})
}{
1-(\phi_1^{(i-1)})^2
}) \\
&= (\mathit{z}_{\ell}^{(0)})^2 \times
(1 + (\phi_1^{(i-1)}-1)\frac{
\phi_1^{(i-1)}-(\phi_1^{(i-1)})^{2i+1}
}{
(1-\phi_1^{(i-1)})(1+\phi_1^{(i-1)})
}) \\
&= (\mathit{z}_{\ell}^{(0)})^2 \times
\frac{
1+(\phi_1^{(i-1)})^{2i+1}
}{
1+\phi_1^{(i-1)}
} > 0
\end{align*}

Thereby $0 < \phi_1^{(k)} < 1$ holds for $k=i$.

\paragraph*{Case 2: $\phi_1^{(0)} < 0$}
Similar proof steps apply.

\paragraph*{Case 3: $\phi_1^{(0)} = 0$}
We can show that $\phi_1^{(k)} = 0, 1\leq k \leq n-\ell$.
\end{proof}

The following conclusion illustrates that with a bounded parameter in each iteration,
the parameter converges. 

\begin{proposition}
\label{proposition:mean-converge-dynamic}
For IMR(1) with first $\ell$ data points labeled in $\mathit{y}^{(0)}$, if $|\phi_1^{(k)}| < 1$ in the iterations $k, 0 \leq k \leq n-\ell$,
then the parameter converges, 
i.e., 
$$\lim_{k\rightarrow+\infty} \phi_1^{(k)} = \phi_1,$$
\end{proposition}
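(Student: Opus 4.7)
The plan is to combine the hypothesis $|\phi_1^{(k)}|<1$ with a strict monotonicity property of the parameter sequence and then invoke the monotone convergence theorem for real sequences.

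First, I would extract monotonicity directly from the algebra already carried out inside the proof of Lemma~\ref{lemma:bound-condition}. Case~1 there establishes the identity $\phi_1^{(k)}-\phi_1^{(k-1)} = \phi_1^{(k-1)}\bigl(1-(\phi_1^{(k-1)})^2\bigr)(\mathit{z}_{\ell+k-1}^{(k-1)})^2 / D$ with a strictly positive denominator $D$, so $\phi_1^{(k)}-\phi_1^{(k-1)}$ has the same sign as $\phi_1^{(k-1)}$ whenever $|\phi_1^{(k-1)}|<1$. Together with the hypothesis, this makes $\phi_1^{(k)}$ strictly monotone on the window $0\le k\le n-\ell$: strictly increasing when $\phi_1^{(0)}>0$, strictly decreasing when $\phi_1^{(0)}<0$, and identically zero when $\phi_1^{(0)}=0$.

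Second, I would extend monotonicity and the bound to all iterations $k>n-\ell$. After iteration $n-\ell$ every unlabeled point has been repaired at least once, and in each subsequent iteration the minimum-change rule selects one index $t$ and replaces $\mathit{z}_t^{(k+1)}$ by $\phi_1^{(k)}\mathit{z}_{t-1}^{(k)}$. Re-deriving the same identity at a general index $t$, using the provenance formula in Lemma~\ref{proposition:provenance-ar1} to express $\mathit{z}_{t-1}^{(k)}$ as a product of past parameters times $\mathit{z}_\ell^{(0)}$, I would show that the sign of $\phi_1^{(k+1)}-\phi_1^{(k)}$ is still that of $\phi_1^{(k)}$ and that $|\phi_1^{(k+1)}|<1$ is preserved inductively, with the crucial factor $(1-(\phi_1^{(k)})^2)$ appearing in both numerator and denominator of the difference expression. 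This step is the main technical obstacle: outside the initial window the index $t$ being repaired is no longer the ``next new'' index $\ell+k$ used inside Lemma~\ref{lemma:bound-condition}, so the algebra that yielded the clean factorization there must be redone carefully for an arbitrary choice of $t$, taking into account that earlier entries $\mathit{z}_{t-1}^{(k)}$ may themselves have been rewritten in previous iterations.

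Finally, with a uniformly bounded and strictly monotone real sequence in hand (or the constant zero sequence in Case~3), the monotone convergence theorem yields $\lim_{k\to\infty}\phi_1^{(k)}=\phi_1\in[-1,1]$, which is exactly the claimed convergence.
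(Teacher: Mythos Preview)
Your monotone-convergence strategy is different from the paper's, and the difference matters precisely at the step you flag as the ``main technical obstacle.'' The paper does not attempt to track the sign of $\phi_1^{(k+1)}-\phi_1^{(k)}$ or to maintain the bound $|\phi_1^{(k)}|<1$ for all $k>n-\ell$. Instead it first proves, using the hypothesis $|\phi_1^{(k)}|<1$ on the window $0\le k\le n-\ell$, that the minimum-change rule repairs the indices in the fixed order $\ell+1,\ell+2,\dots,n$, and then shows that \emph{for every subsequent iteration the repaired index is always $t=n$}. Once that is known, only $\mathit{z}_n$ ever changes, the denominator $\sum_{i=1}^{n-1}(\mathit{z}_i)^2$ in the OLS formula is a constant, and one obtains
\[
|\phi_1^{(m+1)}-\phi_1^{(m)}|=\frac{(\mathit{z}_{n-1})^2}{\sum_{i=1}^{n-1}(\mathit{z}_i)^2}\,|\phi_1^{(m)}-\phi_1^{(m-1)}|,
\]
a contraction with fixed ratio strictly less than $1$, giving convergence by a Cauchy argument.

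So the paper's extra ingredient is exactly the observation that resolves your obstacle: after iteration $n-\ell$ the repaired index is not ``some arbitrary $t$'' but always $n$. With that in hand your monotonicity route would also go through, but the contraction route buys more---it does not require extending the bound $|\phi_1^{(k)}|<1$ beyond $k=n-\ell$, which is not obviously preserved by the recursion and which your plan would still need to verify.
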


\begin{proof}
Let $\mathit{z}_t^{(k)} = \mathit{y}_t^{(k)} - \mathit{x}_t$ and 
$\hat{\mathit{z}}_t^{(k)} = \hat{\mathit{y}}_t^{(k)} - \mathit{x}_t$.
Suppose that point $t$ is selected to repair in the $k$-th iteration, 
having 
$$\mathit{z}_t^{(k+1)} =\hat{\mathit{z}}_t^{(k)}= \phi_1^{(k)}\mathit{z}_{t-1}^{(k)}$$
where $t  = \argmin\limits_{i \in [1,n]}|\hat{\mathit{z}}_i^{(k)}|$ 
referring to the minimum change.
Since one and only one point will be changed in each iteration, we have
$\mathit{z}_i^{(k+1)} = \mathit{z}_i^{(k)}, i \neq t$.

For the $(k + 1)$-th iteration, any point $i$ has
$$\hat{\mathit{z}}_i^{(k+1)}= \phi_1^{(k+1)}\mathit{z}_{i-1}^{(k+1)}.$$
It follows 
\begin{align*}
|\hat{\mathit{z}}_{t+1}^{(k+1)}| 
&= |\phi_1^{(k+1)}\mathit{z}_{t}^{(k+1)}| 
= |\phi_1^{(k+1)}|
|\phi_1^{(k)}|
|\mathit{z}_{t-1}^{(k)}| \\
&< |\phi_1^{(k+1)}|
|\mathit{z}_{t-1}^{(k)}|
= |\hat{\mathit{z}}_{t}^{(k+1)}|  \qquad (\text{since }|\phi_1^{(k)}| < 1) \\
&< |\phi_1^{(k+1)}|
|\mathit{z}_{i}^{(k+1)}|
= |\hat{\mathit{z}}_{i+1}^{(k+1)}|
\end{align*}
where $i \neq t$. 
That is, $(t+1)  = \argmin\limits_{i \in [1,n]}|\hat{\mathit{z}}_i^{(k+1)}|$,
will be the minimum repair in the $(k+1)$-th iteration.

Similar repairing applies in $m = n - \ell$ iterations,
having $\mathit{z}_n^{(m)} = \phi_1^{(1)}\ldots\phi_1^{(m-1)}\mathit{z}_{\ell}^{(0)}$, 
$\mathit{z}_{n-1}^{(m)} = \phi_1^{(1)}\ldots\phi_1^{(m-2)}\mathit{z}_{\ell}^{(0)}$, 
according to Lemma \ref{proposition:provenance-ar1}.
For the estimation of $\hat{\mathit{z}}_i^{(m)}$, 
we have $n = \argmin\limits_{i \in [1,n]}|\hat{\mathit{z}}_i^{(m)}|$ 
since $|\mathit{z}_{n-1}^{(m)}| = \min\limits_{i \in [1,n-1]}|\mathit{z}_{i}^{(m)}|$.
That is, point $n$ will always be  repaired in the following iterations.

According to Equation \ref{equation:ols}, we have
\begin{align*}
|\phi_1^{(m+1)} - \phi_1^{(m)}|
&= |\frac{
\mathit{z}_1^{(m+1)}\mathit{z}_2^{(m+1)}
+\ldots+
\mathit{z}_{n-1}^{(m+1)}\mathit{z}_{n}^{(m+1)}
}{
(\mathit{z}_1^{(m+1)})^2
+\ldots+
(\mathit{z}_{n-1}^{(m+1)})^2
} \\ 
& \qquad - \frac{
\mathit{z}_1^{(m)}\mathit{z}_2^{(m)}
+\ldots+
\mathit{z}_{n-1}^{(m)}\mathit{z}_{n}^{(m)}
}{
(\mathit{z}_1^{(m)})^2
+\ldots+
(\mathit{z}_{n-1}^{(m)})^2
}| \\
& = |\frac{
\mathit{z}_{n-1}^{(m)}
(\mathit{z}_{n}^{(m+1)}-\mathit{z}_{n}^{(m)})
}{
(\mathit{z}_1^{(m)})^2
+\ldots+
(\mathit{z}_{n-1}^{(m)})^2
}| \\
& = |\frac{
(\mathit{z}_{n-1}^{(m)})^2
}{
(\mathit{z}_1^{(m)})^2
+\ldots+
(\mathit{z}_{n-1}^{(m)})^2
}|\cdot
|\phi_1^{(m)}-\phi_1^{(m-1)}| \\
& < |\phi_1^{(m)}-\phi_1^{(m-1)}|
\end{align*}
where $\mathit{z}_{i}^{(m)} = \mathit{z}_{i}^{(m+1)}, i < n$. 
It is proved that the parameter converges.
\end{proof}

It is worth noting that the condition $|\phi_1^{(k)}| < 1$ in Proposition \ref{proposition:mean-converge-dynamic} for the parameter to converge 
could be commonly observed in real data. 
First, referring to \cite{brockwell2006introduction}, most time series in practice are stationary, 
which is guaranteed to have $|\phi_1^{(k)}| < 1$ for $\mathit{p}=1$. 
Moreover, for non-stationary cases, a typical processing way is to transform it to stationary via differencing \cite{brockwell2006introduction}.

\begin{proposition}
\label{proposition:dynamic-ar1-one}
For IMR(1) with first $\ell$ data points labeled in $\mathit{y}^{(0)}$, 
if the parameter converges, having 
$\lim_{k\rightarrow+\infty} \phi_1^{(k)} = \phi_1,$
then the converged repair result is
\begin{align}\label{equation:repair-converge-single}
\lim_{k\rightarrow+\infty}\mathit{y}_t^{(k)} = \mathit{y}_t = 
\begin{cases}
\mathit{y}_i^{(0)} & \quad \mathrm{if~} i \in [1,\ell] \\
\phi_1^{i-\ell}(\mathit{y}_{\ell}^{(0)} - \mathit{x}_{\ell}) + \mathit{x}_i & \quad \mathrm{if~} i > \ell
\end{cases} 
\end{align}
where the converge parameter can be directly calculated by 
\begin{align}
\label{equation:parameter-converge-single}
\phi_1 = \frac{
(\mathit{y}_1^{(0)}-\mathit{x}_1)(\mathit{y}_2^{(0)}-\mathit{x}_2)
+\dots+
(\mathit{y}_{\ell-1}^{(0)}-\mathit{x}_{\ell-1})(\mathit{y}_\ell^{(0)}-\mathit{x}_\ell)
}{
(\mathit{y}_1^{(0)}-\mathit{x}_1)^2
+\dots+
(\mathit{y}_{\ell-1}^{(0)}-\mathit{x}_{\ell-1})^2
}.
\end{align}
\end{proposition}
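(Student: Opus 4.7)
The plan is to obtain the statement as a direct specialization of Proposition \ref{proposition:dynamic-ar1-general} to the case of a single labeled segment sitting at the very beginning of the sequence. I would instantiate that proposition with $m = 1$, $s(1) = 1$, $e(1) = \ell$, and invoke the convention $\mathit{z}_{e(0)} = 0$ stated just before Proposition \ref{proposition:dynamic-ar1-general}. Equation \ref{equation:y-converge-mulitple} then splits into the two cases claimed here: the labeled prefix $[1,\ell]$ gives $\mathit{y}_i = \mathit{y}_i^{(0)}$, while for every $i > \ell$ there is no subsequent segment $s(2)$ to terminate the provenance, so the geometric formula $\mathit{y}_i = \phi_1^{i-\ell}(\mathit{y}_{\ell}^{(0)} - \mathit{x}_{\ell}) + \mathit{x}_i$ extends all the way to the end of the sequence, matching Equation \ref{equation:repair-converge-single}.

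For the explicit formula for $\phi_1$, I would plug the same parameters into Equation \ref{equation:parameter-converge-mulitple}. The only $j$-term present is $j = 1$, and both $\phi_1^{s(1)-1-e(0)} \mathit{z}_{e(0)} \mathit{z}_{s(1)}$ in the numerator and $(\phi_1^{s(1)-1-e(0)} \mathit{z}_{e(0)})^2$ in the denominator vanish because $\mathit{z}_{e(0)} = 0$. What remains is precisely $\phi_1 = \sum_{i=1}^{\ell-1} \mathit{z}_i \mathit{z}_{i+1} \big/ \sum_{i=1}^{\ell-1} \mathit{z}_i^2$; rewriting $\mathit{z}_i = \mathit{y}_i^{(0)} - \mathit{x}_i$ on the labeled prefix reproduces Equation \ref{equation:parameter-converge-single} verbatim.

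The only conceptual point worth emphasizing is why the converged parameter is now given by an \emph{explicit} closed-form expression rather than the implicit fixed-point equation of Proposition \ref{proposition:dynamic-ar1-general}. The implicit $\phi_1$-dependent contributions in Equation \ref{equation:parameter-converge-mulitple} originate from the boundaries at which an unlabeled region meets a subsequent labeled segment; since here the single labeled segment is placed at the very start of the sequence, no such boundary exists and all such contributions collapse to zero. If a self-contained derivation were preferred, the alternative route would be to use Lemma \ref{proposition:provenance-ar1} to express each converged $\mathit{y}_t$, $t > \ell$, in terms of $\mathit{y}_{\ell}^{(0)}$, and then substitute the converged residuals into Equation \ref{equation:ols} with $\mathit{p} = 1$, using the geometric identity $\mathit{z}_{i+1} = \phi_1 \mathit{z}_i$ on the unlabeled tail to telescope those contributions away. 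I do not anticipate any obstacle beyond carefully unpacking the conventions around $e(0)$ and $\mathit{z}_{e(0)}$.
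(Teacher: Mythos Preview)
Your proposal is correct and follows essentially the same approach as the paper: both treat the statement as the $m=1$ specialization of Proposition~\ref{proposition:dynamic-ar1-general}. The only cosmetic difference is that the paper re-runs the telescoping computation from the proof of Proposition~\ref{proposition:dynamic-ar1-general} explicitly, whereas you plug $m=1$, $s(1)=1$, $e(1)=\ell$, $\mathit{z}_{e(0)}=0$ directly into Equation~\ref{equation:parameter-converge-mulitple} to see the $\phi_1$-dependent terms vanish; both routes are equivalent and you even sketch the paper's alternative at the end.
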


\begin{proof}
It is a special case of Proposition \ref{proposition:dynamic-ar1-general} with $m = 1$.
Similar to the proof of Proposition \ref{proposition:dynamic-ar1-general}, 
we have
\begin{align*}
\phi_1 &= \frac{
\mathit{z}_1\mathit{z}_2
+\ldots+
\mathit{z}_{n-1}\mathit{z}_{n}
}{
(\mathit{z}_1)^2
+\ldots+
(\mathit{z}_{n-1})^2
} \\
&= \frac{
\mathit{z}_1\mathit{z}_2
+\ldots+
\mathit{z}_{\ell-1}\mathit{z}_{\ell}
+ \phi_1(\mathit{z}_{\ell})^2
+ \ldots + \phi_1^{2n-2\ell-1}(\mathit{z}_{\ell})^2
}{
(\mathit{z}_1)^2
+\ldots+
(\mathit{z}_{\ell})^2
+ \phi_1^2(\mathit{z}_{\ell})^2
+\ldots+
\phi_1^{2n-2\ell-2}(\mathit{z}_{\ell})^2
} \\
&= \frac{
\mathit{z}_1\mathit{z}_2
+\ldots+
\mathit{z}_{\ell-1}\mathit{z}_{\ell}
}{
(\mathit{z}_1)^2
+\ldots+
(\mathit{z}_{\ell-1})^2
} .
\end{align*}
\end{proof}

It is not surprising that the converged parameter $\phi_1$ 
in Equation \ref{equation:parameter-converge-single}
in Proposition \ref{proposition:dynamic-ar1-one}
is exactly the solution of 
Equation \ref{equation:parameter-converge-mulitple}  
in Proposition \ref{proposition:dynamic-ar1-general}
with $m=1$
for the special case of one labeled segment.  

\section{Efficient Parameter Estimation}\label{sect:prune}

Among the three major steps in Algorithm \ref{algorithm:arx-iteration}, 
while the repair candidate generation and evaluation (in Sections \ref{sect:candidate-generation} and \ref{sect:repair-evaluation}) are inevitable for the minimum repair,
we show in this section that the costly parameter $\phi^{(k)}$ estimation (in Section \ref{sect:parameter-estimation}) is optimizable 
in our iterative repairing scenario. 
First, we identify that the matrices $\boldsymbol{\mathit{Z}}^{(k)},\boldsymbol{\mathit{V}}^{(k)}$ for parameter estimation could be pruned by simply removing rows with value 0, in Section \ref{sect:matrix-pruning}. 
Moreover, incremental computation could be designed such that the complexity of parameter estimation is reduced from $O(n)$ to $O(1)$, in Section \ref{sect:incremental-computation}.

\subsection{Matrix Pruning}
\label{sect:matrix-pruning}

\paragraph*{Intuition}
Recall that when estimating the parameter $\phi^{(k)}$ in Equation \ref{equation:ols}, 
we need to consider two large matrices $\boldsymbol{\mathit{Z}}^{(k)}$ and $\boldsymbol{\mathit{V}}^{(k)}$ with sizes $(n-p)\times p$ and $(n-p)\times 1$, respectively. 
The value $\mathit{y}_{i}^{(k)} - \mathit{x}_{i}$ in $\boldsymbol{\mathit{Z}}^{(k)}$ and $\boldsymbol{\mathit{V}}^{(k)}$ denotes the difference between the labeled/repaired value $\mathit{y}_{i}^{(k)}$ and the input value $\mathit{x}_{i}$ of point $i$.
In practice, the labeled data is often limited, while the repaired data should not be significantly changed referring to the minimum change principle of repairing. 
That is, most values in $\boldsymbol{\mathit{Z}}^{(k)}$ and $\boldsymbol{\mathit{V}}^{(k)}$ equal to 0. 
We show (in Proposition \ref{proposition:matrix-reduce} below) that the sparse matrices could be pruned by removing rows with value 0. 

Let $\mathit{z}^{(k)}_i = \mathit{y}^{(k)}_i - \mathit{x}_i$ for simplicity.  
We rewrite $\boldsymbol{\mathit{V}}^{(k)},\boldsymbol{\mathit{Z}}^{(k)}$ in Equation \ref{equation:ols} for parameter estimation, 
\begin{align*}
\boldsymbol{\mathit{V}}^{(k)} = & 
\begin{pmatrix}
\mathit{z}_{\mathit{p} + 1}^{(k)} \\
\mathit{z}_{\mathit{p} + 2}^{(k)} \\
\vdots \\
\mathit{z}_{n}^{(k)}
\end{pmatrix}
, \quad
\boldsymbol{\mathit{Z}}^{(k)} =  
\begin{pmatrix}
\mathit{z}_{\mathit{p}}^{(k)} & \ldots & \mathit{z}_1^{(k)} \\
\mathit{z}_{\mathit{p}+1}^{(k)} & \ldots & \mathit{z}_2^{(k)} \\
\vdots & \ddots & \vdots \\
\mathit{z}_{n-1}^{(k)} & \ldots & \mathit{z}_{n-\mathit{p}}^{(k)}
\end{pmatrix}
.
\end{align*}

The following conclusion states that 
the same parameter $\phi^{(k)}$ could still be computed by Equation \ref{equation:ols}, after removing the rows in $\boldsymbol{\mathit{Z}}^{(k)}$, whose values equal to 0, and the corresponding rows in $\boldsymbol{\mathit{V}}^{(k)}$.

\begin{proposition}
\label{proposition:matrix-reduce}
For any row in $\boldsymbol{\mathit{Z}}^{(k)}$, denoted by $\boldsymbol{\mathit{Z}}_{r}^{(k)}$, 
if the entire row are all with value 0, 
i.e., $\mathit{z}_{r+\mathit{p}-1}^{(k)}=\mathit{z}_{r+\mathit{p}-2}^{(k)}=\ldots=\mathit{z}_{r}^{(k)}=0$,
then it is safe to remove the row $\boldsymbol{\mathit{Z}}_{r}^{(k)}$ and the corresponding row 
$\boldsymbol{\mathit{V}}_{r}^{(k)} = \begin{pmatrix}\mathit{z}_{\mathit{p}+r}^{(k)}\end{pmatrix}$ 
from matrices $\boldsymbol{\mathit{Z}}^{(k)}$ and $\boldsymbol{\mathit{V}}^{(k)}$,  respectively, 
which still compute that the same $\phi^{(k)}$.
\end{proposition}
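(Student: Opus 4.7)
The plan is to reduce the claim to the observation that ordinary least squares depends on $\boldsymbol{\mathit{Z}}^{(k)}$ and $\boldsymbol{\mathit{V}}^{(k)}$ only through the two products $(\boldsymbol{\mathit{Z}}^{(k)})'\boldsymbol{\mathit{Z}}^{(k)}$ and $(\boldsymbol{\mathit{Z}}^{(k)})'\boldsymbol{\mathit{V}}^{(k)}$, and that each of these can be decomposed as a sum of rank-one contributions indexed by rows. Concretely, I would first write
$$
(\boldsymbol{\mathit{Z}}^{(k)})'\boldsymbol{\mathit{Z}}^{(k)} \;=\; \sum_{r=1}^{n-\mathit{p}} (\boldsymbol{\mathit{Z}}_{r}^{(k)})'\boldsymbol{\mathit{Z}}_{r}^{(k)},
\qquad
(\boldsymbol{\mathit{Z}}^{(k)})'\boldsymbol{\mathit{V}}^{(k)} \;=\; \sum_{r=1}^{n-\mathit{p}} (\boldsymbol{\mathit{Z}}_{r}^{(k)})'\boldsymbol{\mathit{V}}_{r}^{(k)},
$$
which is immediate from the definition of matrix multiplication (every entry of the left-hand side is a sum over $r$ of products of entries taken from row $r$).

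Next I would observe that whenever $\boldsymbol{\mathit{Z}}_{r}^{(k)}$ is the zero row, i.e.\ $\mathit{z}_{r+\mathit{p}-1}^{(k)} = \cdots = \mathit{z}_{r}^{(k)} = 0$, the rank-one term $(\boldsymbol{\mathit{Z}}_{r}^{(k)})'\boldsymbol{\mathit{Z}}_{r}^{(k)}$ is the $\mathit{p}\times\mathit{p}$ zero matrix and $(\boldsymbol{\mathit{Z}}_{r}^{(k)})'\boldsymbol{\mathit{V}}_{r}^{(k)}$ is the $\mathit{p}\times 1$ zero vector, \emph{regardless} of the value of $\boldsymbol{\mathit{V}}_{r}^{(k)} = \mathit{z}_{\mathit{p}+r}^{(k)}$. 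Dropping such a summand therefore leaves both products unchanged. Hence deleting row $r$ from $\boldsymbol{\mathit{Z}}^{(k)}$ together with the matched row $\boldsymbol{\mathit{V}}_{r}^{(k)}$ from $\boldsymbol{\mathit{V}}^{(k)}$ produces reduced matrices $\tilde{\boldsymbol{\mathit{Z}}}^{(k)}$, $\tilde{\boldsymbol{\mathit{V}}}^{(k)}$ that satisfy
$$
(\tilde{\boldsymbol{\mathit{Z}}}^{(k)})'\tilde{\boldsymbol{\mathit{Z}}}^{(k)} \;=\; (\boldsymbol{\mathit{Z}}^{(k)})'\boldsymbol{\mathit{Z}}^{(k)},
\qquad
(\tilde{\boldsymbol{\mathit{Z}}}^{(k)})'\tilde{\boldsymbol{\mathit{V}}}^{(k)} \;=\; (\boldsymbol{\mathit{Z}}^{(k)})'\boldsymbol{\mathit{V}}^{(k)}.
$$
Plugging these equalities into Equation \ref{equation:ols} yields the same estimate $\phi^{(k)}$, which is exactly the claim.

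There are essentially no obstacles of substance; the only mild subtlety I would flag is invertibility of $(\tilde{\boldsymbol{\mathit{Z}}}^{(k)})'\tilde{\boldsymbol{\mathit{Z}}}^{(k)}$, but since this Gram matrix is literally equal to the original $(\boldsymbol{\mathit{Z}}^{(k)})'\boldsymbol{\mathit{Z}}^{(k)}$, invertibility is inherited automatically whenever Equation \ref{equation:ols} is meaningful for the original data. Thus the argument is a short linear-algebra decomposition, and the main care in the write-up is just to index rows cleanly (using $\boldsymbol{\mathit{Z}}_{r}^{(k)}$ for the $r$-th row of $\boldsymbol{\mathit{Z}}^{(k)}$ as the statement does) and to spell out that zero rows contribute zero to both products.
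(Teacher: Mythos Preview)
Your proposal is correct and follows essentially the same approach as the paper: both arguments show that removing a zero row of $\boldsymbol{\mathit{Z}}^{(k)}$ leaves the Gram matrix $(\boldsymbol{\mathit{Z}}^{(k)})'\boldsymbol{\mathit{Z}}^{(k)}$ and the cross-product $(\boldsymbol{\mathit{Z}}^{(k)})'\boldsymbol{\mathit{V}}^{(k)}$ unchanged, and hence $\phi^{(k)}$ is preserved. The only cosmetic difference is that the paper verifies this entry-by-entry (writing out each $a_{ii}^{(k)}$, $a_{ij}^{(k)}$, $b_i^{(k)}$ as an explicit sum and checking that the term indexed by row $r$ vanishes), whereas you phrase the same computation more compactly as a row-wise rank-one decomposition; your write-up is slightly cleaner but the underlying argument is identical.
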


\begin{proof}
We rewrite Equation \ref{equation:ols} as
$\phi^{(k)} = (\boldsymbol{\mathit{A}}^{(k)})^{-1}\boldsymbol{\mathit{B}}^{(k)}$, 
where 
$\boldsymbol{\mathit{A}}^{(k)} = (\boldsymbol{\mathit{Z}}^{(k)})'\boldsymbol{\mathit{Z}}^{(k)}$ 
and 
$\boldsymbol{\mathit{B}}^{(k)} = (\boldsymbol{\mathit{Z}}^{(k)})'\boldsymbol{\mathit{V}}^{(k)}$
as illustrated in Equations 
\ref{equation:mid-matrix-A}
and 
\ref{equation:mid-matrix-B},
respectively. 
To prove the same parameter $\phi^{(k)}$, 
we show that all the values in 
$\boldsymbol{\mathit{A}}^{(k)}$ 
and 
$\boldsymbol{\mathit{B}}^{(k)}$ 
are unchanged after removing the $r$-th row with value~0.

(1) For 
$a^{(k)}_{ii}, 1\leq i\leq p$ in Equation \ref{definition:aii}, 
we have \\
$a^{(k)}_{ii}
= \sum\limits_{l=\mathit{p}+1-i}^{n-i}\mathit{z}_{l}^{(k)}\mathit{z}_{l}^{(k)}
=\sum\limits_{l\neq r+\mathit{p}-i}\mathit{z}_{l}^{(k)}\mathit{z}_{l}^{(k)}
+ \mathit{z}_{r+\mathit{p}-i}^{(k)}\mathit{z}_{r+\mathit{p}-i}^{(k)}
= \sum\limits_{l\neq r+\mathit{p}-i}\mathit{z}_{l}^{(k)}\mathit{z}_{l}^{(k)} + 0$.
That is, $a^{(k)}_{ii}$ will not be affected without considering 
$\mathit{z}_{r+\mathit{p}-i}^{(k)}$ in the $r$-th row  $\boldsymbol{\mathit{Z}}_{r}^{(k)}$.

(2) For 
$a^{(k)}_{ij}, a^{(k)}_{ji} ,
1\leq i\leq p, 1\leq j\leq p$ in Equation \ref{definition:aij}, 
we have
$a^{(k)}_{ij} = a^{(k)}_{ji} 
=\sum\limits_{l=\mathit{p}+1-i}^{n-i}\mathit{z}_{l}^{(k)}\mathit{z}_{l-j+i}^{(k)}
=\sum\limits_{l\neq r+\mathit{p}-i}\mathit{z}_{l}^{(k)}\mathit{z}_{l-j+i}^{(k)}
+ \mathit{z}_{r+\mathit{p}-i}^{(k)}\mathit{z}_{r+\mathit{p}-j}^{(k)}
=\sum\limits_{l\neq r+\mathit{p}-i}\mathit{z}_{l}^{(k)}\mathit{z}_{l-j+i}^{(k)} + 0$.
Again, $a^{(k)}_{ij}, a^{(k)}_{ji}$ will not be affected without considering 
$\mathit{z}_{r+\mathit{p}-i}^{(k)}, \mathit{z}_{r+\mathit{p}-j}^{(k)}$ in the $r$-th row  $\boldsymbol{\mathit{Z}}_{r}^{(k)}$.

(3) For
$b^{(k)}_{i}  ,
1\leq i\leq p$ in Equation \ref{definition:bi},  
we have 
$b^{(k)}_{i} 
= \sum\limits_{l=\mathit{p}+1}^{n}\mathit{z}_{l}^{(k)}\mathit{z}_{l-i}^{(k)}
= \sum\limits_{l\neq r+\mathit{p}}\mathit{z}_{l}^{(k)}\mathit{z}_{l-i}^{(k)}
+ \mathit{z}_{r+\mathit{p}}^{(k)}\mathit{z}_{r+\mathit{p}-i}^{(k)}
= \sum\limits_{l\neq r+\mathit{p}}\mathit{z}_{l}^{(k)}\mathit{z}_{l-i}^{(k)} + 0$.
Finally, $b^{(k)}_{i}$ will not be affected without considering 
$\mathit{z}_{r+\mathit{p}}^{(k)}, \mathit{z}_{r+\mathit{p}-i}^{(k)}$ in the $r$-th row  $\boldsymbol{\mathit{Z}}_{r}^{(k)}$ as well.
\end{proof}

\begin{example}[Parameter estimation with matrix pruning, Example \ref{example:parameter-estimation} continued]\label{example:parameter-estimation-mr}
Consider again $\mathit{x} = \{6, 10, 9.6, 8.3, \allowbreak 7.7,  
5.4, 5.6, 5.9, 6.3, 6.8, 7.5, 8.5\}$ and $\mathit{y}^{(0)} = \{6, 5.6, 5.4, 8.3, \allowbreak 7.7, 5.4, \allowbreak 5.6, 5.9, 6.3, 6.8, 7.5, 8.5\}$ 
in Example \ref{example:parameter-estimation}. 
Given order $\mathit{p}=1$, we have $\boldsymbol{\mathit{V}}^{(0)} = \{-4.4,-4.2,0,0,0,\allowbreak 0,0,0,0,0,0\}'$ 
with size $11\times1$,
and $\boldsymbol{\mathit{Z}}^{(0)} = \{0,-4.4,-4.2,0,0,0,\allowbreak 0,0,0,0,0\}'$
with size $11\times1$.
Referring to Proposition \ref{proposition:matrix-reduce}, all the rows except the fourth and fifth rows in $\boldsymbol{\mathit{Z}}^{(0)}$ and the corresponding rows in $\boldsymbol{\mathit{V}}^{(0)}$ can be removed.
After matrix pruning, 
we have $\boldsymbol{\mathit{V}}^{(0)} = \{-4.2,0\}'$ 
with size $2\times1$, and 
$\boldsymbol{\mathit{Z}}^{(0)} = \{-4.4,-4.2\}'$ 
with size $2\times1$.
Referring to Equation \ref{equation:ols}, the parameter is estimated by
$$\phi_1^{(0)} = \frac{(-4.4)*(-4.2)}{(-4.4)^2+(-4.2)^2} = 0.5.$$
The computed parameter is the same as in Example \ref{example:parameter-estimation}.
\end{example}

\subsection{Incremental Computation}
\label{sect:incremental-computation}

\paragraph*{Intuition}
In each iteration in Algorithm \ref{algorithm:arx-iteration}, 
the parameter $\phi^{(k)}$ is estimated by Equation \ref{equation:ols} 
w.r.t.\ $\boldsymbol{\mathit{Z}}^{(k)}$ and $\boldsymbol{\mathit{V}}^{(k)}$
over all the $n$ points. 
However, referring to the minimum change principle in Section \ref{sect:repair-evaluation}, 
there is only one point, say $r$, which is changed in each iteration, 
having $\mathit{y}_{r}^{(k)} \neq \mathit{y}_{r}^{(k-1)}$. 
That is, most values in $\boldsymbol{\mathit{Z}}^{(k)}$ and $\boldsymbol{\mathit{V}}^{(k)}$ are the same to $\boldsymbol{\mathit{Z}}^{(k-1)}$ and $\boldsymbol{\mathit{V}}^{(k-1)}$. 
We show (in Proposition \ref{the:incremental} below) that 
$\phi^{(k)}$ can be incrementally computed by considering only the changed values 
instead of the entire $\boldsymbol{\mathit{Z}}^{(k)}$ and $\boldsymbol{\mathit{V}}^{(k)}$.
The time complexity of parameter estimation in each iteration is thus reduced from 
linear
time to constant time.

\subsubsection{Recursive Formula}

To enable the incremental computation, we rewrite Equation \ref{equation:ols} for parameter estimation as follows, 
\begin{align}
\label{equation:mid-matrix-phi}
\phi^{(k)} &= (\boldsymbol{\mathit{A}}^{(k)})^{-1}\boldsymbol{\mathit{B}}^{(k)}
\\
\label{equation:mid-matrix-A}
\boldsymbol{\mathit{A}}^{(k)} &= (\boldsymbol{\mathit{Z}}^{(k)})'\boldsymbol{\mathit{Z}}^{(k)} = 
 \begin{pmatrix}
  a_{11}^{(k)} & a_{12}^{(k)} & \cdots & a_{1p}^{(k)} \\
  a_{21}^{(k)} & a_{22}^{(k)} & \cdots & a_{2p}^{(k)} \\
  \vdots  & \vdots  & \ddots & \vdots  \\
  a_{p1}^{(k)} & a_{p2}^{(k)} & \cdots & a_{pp}^{(k)} 
 \end{pmatrix}
\\ 
\label{equation:mid-matrix-B}
\boldsymbol{\mathit{B}}^{(k)} &= (\boldsymbol{\mathit{Z}}^{(k)})'\boldsymbol{\mathit{V}}^{(k)}= 
 \begin{pmatrix}
  b_{1}^{(k)}  \\
  b_{2}^{(k)}  \\
  \vdots   \\
  b_{p}^{(k)} 
 \end{pmatrix}
\end{align}
where 
\begin{align}
\label{definition:aii}
a^{(k)}_{ii} &= \sum\limits_{l=\mathit{p}+1-i}^{n-i}\mathit{z}_{l}^{(k)}\mathit{z}_{l}^{(k)} ,
& & 1\leq i\leq p \\ 
\label{definition:aij}
a^{(k)}_{ij} &= a^{(k)}_{ji} =\sum\limits_{l=\mathit{p}+1-i}^{n-i}\mathit{z}_{l}^{(k)}\mathit{z}_{l-j+i}^{(k)},
& & 1\leq i\leq p, 1\leq j\leq p , j>i \\
\label{definition:bi}
b^{(k)}_{i} &= \sum\limits_{l=\mathit{p}+1}^{n}\mathit{z}_{l}^{(k)}\mathit{z}_{l-i}^{(k)} ,
& & 1\leq i\leq p 
\end{align}

The following conclusion illustrates that 
$\boldsymbol{\mathit{A}}^{(k)}$ and $\boldsymbol{\mathit{B}}^{(k)}$, 
with sizes $p\times p$ and $p\times 1$, respectively, could be recursively computed from 
$\boldsymbol{\mathit{A}}^{(k-1)}$ and $\boldsymbol{\mathit{B}}^{(k-1)}$. 
The pre-defined $p$ in the AR($p$)/ARX($p$) model is a fixed value in the algorithm and 
has $p \ll n$. 
The computational cost of Equation \ref{equation:mid-matrix-phi} over $\boldsymbol{\mathit{A}}^{(k)}$ and $\boldsymbol{\mathit{B}}^{(k)}$ with sizes on $p$ 
will be significantly lower than 
Equation \ref{equation:ols} 
w.r.t.\ $\boldsymbol{\mathit{Z}}^{(k)}$ and $\boldsymbol{\mathit{V}}^{(k)}$
over all the $n$ points.

\begin{proposition}
\label{the:incremental}
Let $r$ be the changed point in the current repair iteration, having 
$\mathit{y}_{r}^{(k)} \neq \mathit{y}_{r}^{(k-1)}$ or equivalently 
$\mathit{z}_{r}^{(k)} \neq \mathit{z}_{r}^{(k-1)}$. 
$\boldsymbol{\mathit{A}}^{(k)}, \boldsymbol{\mathit{B}}^{(k)}$ could be recursively computed from 
$\boldsymbol{\mathit{A}}^{(k-1)}, \boldsymbol{\mathit{B}}^{(k-1)}$. 
\end{proposition}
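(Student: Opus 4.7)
The plan is to isolate, in each of the sums defining $a^{(k)}_{ii}$, $a^{(k)}_{ij}$, and $b^{(k)}_{i}$ (Equations~\ref{definition:aii}--\ref{definition:bi}), the \emph{few} terms that involve the single changed value $\mathit{z}^{(k)}_r$, and to express $\boldsymbol{\mathit{A}}^{(k)}, \boldsymbol{\mathit{B}}^{(k)}$ as $\boldsymbol{\mathit{A}}^{(k-1)}, \boldsymbol{\mathit{B}}^{(k-1)}$ plus an $O(1)$ correction per entry. Since every other $\mathit{z}^{(k)}_l = \mathit{z}^{(k-1)}_l$ for $l \neq r$, all terms of the sums that do not reference index $r$ cancel when we take the difference $a^{(k)}_{\cdot\cdot} - a^{(k-1)}_{\cdot\cdot}$ (and similarly for $b^{(k)}_{\cdot}$), so only a handful of terms remain.

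First I would handle the diagonal entries. For $a^{(k)}_{ii}$ the summand at index $l$ is $\mathit{z}^{(k)}_l \mathit{z}^{(k)}_l$, which depends on $r$ only when $l = r$. Hence
\begin{align*}
a^{(k)}_{ii} = a^{(k-1)}_{ii} + \bigl[(\mathit{z}^{(k)}_{r})^2 - (\mathit{z}^{(k-1)}_{r})^2\bigr] \cdot \mathbf{1}\{p+1-i \leq r \leq n-i\},
\end{align*}
which is an $O(1)$ update for each of the $p$ diagonal entries.

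Next I would treat the off-diagonal $a^{(k)}_{ij}$ (with $j > i$). The summand $\mathit{z}^{(k)}_l \mathit{z}^{(k)}_{l-j+i}$ depends on $r$ only when $l = r$ or when $l - j + i = r$, i.e. $l = r + j - i$. So at most two terms in the sum change, and
\begin{align*}
a^{(k)}_{ij} = a^{(k-1)}_{ij}
& + \bigl[\mathit{z}^{(k)}_{r}\mathit{z}^{(k)}_{r-j+i} - \mathit{z}^{(k-1)}_{r}\mathit{z}^{(k-1)}_{r-j+i}\bigr]\cdot \mathbf{1}\{p+1-i \leq r \leq n-i\} \\
& + \bigl[\mathit{z}^{(k)}_{r+j-i}\mathit{z}^{(k)}_{r} - \mathit{z}^{(k-1)}_{r+j-i}\mathit{z}^{(k-1)}_{r}\bigr]\cdot \mathbf{1}\{p+1-j \leq r \leq n-j\},
\end{align*}
again $O(1)$ per entry, and by symmetry $a^{(k)}_{ji} = a^{(k)}_{ij}$. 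An analogous two-term correction applies to $b^{(k)}_i$, where the summand $\mathit{z}^{(k)}_l \mathit{z}^{(k)}_{l-i}$ is affected exactly when $l = r$ or $l - i = r$.

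The main obstacle, and therefore what I would be most careful with, is the bookkeeping of the index ranges in the indicator functions: the value $r$ may lie outside $[p+1-i, n-i]$ or $[p+1, n]$ near the boundaries of the sequence, in which case the corresponding term simply does not occur and must be zeroed out. Once these boundary conditions are settled, the correctness follows by subtracting the two expressions term by term. The complexity claim is then immediate: each of the $p^2$ entries of $\boldsymbol{\mathit{A}}^{(k)}$ and $p$ entries of $\boldsymbol{\mathit{B}}^{(k)}$ is updated with $O(1)$ work, and since $p$ is a pre-fixed constant with $p \ll n$, the per-iteration cost of parameter estimation drops from $O(n)$ to $O(1)$. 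The final $(\boldsymbol{\mathit{A}}^{(k)})^{-1}\boldsymbol{\mathit{B}}^{(k)}$ in Equation~\ref{equation:mid-matrix-phi} is then a constant-time operation as well.
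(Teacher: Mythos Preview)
Your proposal is correct and follows essentially the same approach as the paper: isolate the at most two summands in each of Equations~\ref{definition:aii}--\ref{definition:bi} that involve the single changed index $r$, and express the update as the difference of those terms subject to boundary-range conditions. The paper presents the boundary bookkeeping as an explicit four-case split (with a figure) and simplifies each bracket using $\mathit{z}^{(k)}_{r\pm(j-i)} = \mathit{z}^{(k-1)}_{r\pm(j-i)}$ to factor out $(\mathit{z}^{(k)}_r - \mathit{z}^{(k-1)}_r)$, whereas you encode the same information more compactly with indicator functions; the substance is identical.
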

\noindent That is, for $1\leq i \leq p$, we have 
\begin{align}
\label{equation:estimate-recursive-a}
a_{ii}^{(k)} &= 
a_{ii}^{(k-1)} + \\ \notag
&
\begin{cases}
0 
& \quad \textrm{if } r < \mathit{p}+1-i \textrm{ or } r > n-i \\
\mathit{z}_r^{(k)}\mathit{z}_r^{(k)} -\mathit{z}_r^{(k-1)}\mathit{z}_r^{(k-1)}
& \quad \textrm{if }  \mathit{p}+1-i \leq r \leq n-i
\end{cases}  
\end{align}
For $1\leq i \leq p, 1\leq j \leq p, i<j$, we have 
\begin{align} \label{equation:estimate-recursive-b}
a_{ij}^{(k)} &= a_{ji}^{(k)}
=a_{ij}^{(k-1)} + (\mathit{z}_r^{(k)}-\mathit{z}_r^{(k-1)})\times \\ \notag
&
\begin{cases}
0 
& \quad \textrm{if } r < \mathit{p}+1-j \textrm{ or } r>n-i \\
\mathit{z}_{r+j-i}^{(k-1)}
& \quad \textrm{if } p+1-j \leq r < p+1-i \\
\mathit{z}_{r-j+i}^{(k-1)}
& \quad \textrm{if } n-j < r \leq n-i \\
(\mathit{z}_{r+j-i}^{(k-1)}+\mathit{z}_{r-j+i}^{(k-1)})
& \quad \textrm{if } p+1-i \leq r \leq n-j
\end{cases}  
\end{align}
For $1\leq i \leq p$, we have 
\begin{align} \label{equation:estimate-recursive-c}
b_{i}^{(k)} &
=b_{i}^{(k-1)} + (\mathit{z}_r^{(k)}-\mathit{z}_r^{(k-1)})\times \\ \notag
& 
\begin{cases}
0 
& \quad \textrm{if } r<\mathit{p}+1-i \\
\mathit{z}_{r+i}^{(k-1)}
& \quad \textrm{if } p+1-i \leq r < p+1 \\
\mathit{z}_{r-i}^{(k-1)}
& \quad \textrm{if } r>n-i \\
(\mathit{z}_{r+i}^{(k-1)}+\mathit{z}_{r-i}^{(k-1)})
& \quad \textrm{if } p+1 \leq r \leq n-i
\end{cases} 
\end{align}

\begin{proof}
\begin{figure}[h]
\centering
\includegraphics[width=0.85\figwidths]{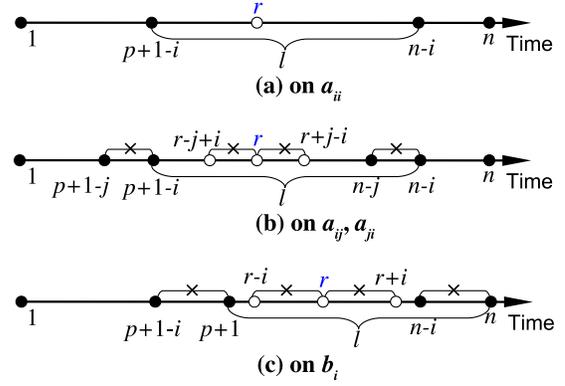}
\caption{Updates for changed point $r$}
\label{fig:imric}
\end{figure}

We show correctness of Equations \ref{equation:estimate-recursive-a}-\ref{equation:estimate-recursive-c} as follows.

(1) For $a_{ii}^{(k)}$ in Equation \ref{definition:aii}, 
it is the summation of $\mathit{z}_{l}^{(k)}$ values for data points $l$ starting from $p+1-i$ to $n-i$.
Recall that there is only one repaired point $r$ in each iteration 
with value changed from $\mathit{z}_r^{(k-1)}$ to $\mathit{z}_r^{(k)}$. 
If $r$ is in the range of summation, i.e., 
$\mathit{p}+1-i \leq r \leq n-i$ as illustrated in Figure \ref{fig:imric}(a), 
we need an update on $a_{ii}^{(k-1)}$ for $a_{ii}^{(k)}$ by adding 
$\mathit{z}_r^{(k)}\mathit{z}_r^{(k)}$ and removing $\mathit{z}_r^{(k-1)}\mathit{z}_r^{(k-1)}$. 
Since other points are not modified, the correctness of Equation \ref{equation:estimate-recursive-a} is illustrated. 

(2) For $a_{ij}^{(k)}$ in Equation \ref{definition:aij}, 
the changed point $r$ may affect two terms,  
$\mathit{z}_{r}^{(k)}\mathit{z}_{r-j+i}^{(k)}$
and 
$\mathit{z}_{r+j-i}^{(k)}\mathit{z}_{r}^{(k)}$, 
in the summation with $l=r$ and $r+j-i$, respectively. 
A term $\mathit{z}_{l}^{(k)}\mathit{z}_{l-j+i}^{(k)}$ is represented by `$\times$' in Figure \ref{fig:imric}(b).
There are four cases:
\\
i) If both terms are not in the summation range, i.e., 
$r<\mathit{p}+1-j$ or $r>n-i$, 
$a_{ij}^{(k)}$ will not change.
\\
ii) For $\mathit{p}+1-i\leq r < \mathit{p}+1$, 
the term $\mathit{z}_{r+j-i}^{(k)}\mathit{z}_{r}^{(k)}$ will be under consideration in the summation. 
We need an update on $a_{ij}^{(k-1)}$ for $a_{ij}^{(k)}$ by adding 
$\mathit{z}_{r+j-i}^{(k)}\mathit{z}_{r}^{(k)}$ and removing 
$\mathit{z}_{r+j-i}^{(k-1)}\mathit{z}_{r}^{(k-1)}$. 
Since point $r+j-i$ is not changed, we have 
$\mathit{z}_{r+j-i}^{(k)}=\mathit{z}_{r+j-i}^{(k-1)}$.
The correctness of the second case in Equation \ref{equation:estimate-recursive-b} is illustrated. 
\\
iii) Similarly, for $n-j< r \leq n-i$, 
the term $\mathit{z}_{r}^{(k)}\mathit{z}_{r-j+i}^{(k)}$ is under consideration. 
An update 
$(\mathit{z}_r^{(k)}-\mathit{z}_r^{(k-1)})\times 
\mathit{z}_{r-j+i}^{(k-1)}$
needs to be applied on $a_{ij}^{(k-1)}$ for $a_{ij}^{(k)}$.
\\
iv) If both terms are in the range, i.e., 
$p+1-i \leq r \leq n-j$, 
we introduce an update on $a_{ij}^{(k-1)}$ for $a_{ij}^{(k)}$ by adding 
$(\mathit{z}_{r}^{(k)}-\mathit{z}_{r}^{(k-1)})\times(\mathit{z}_{r+j-i}^{(k-1)}+\mathit{z}_{r-j+i}^{(k-1)})$.

(3) For $b_{i}^{(k)}$ in Equation \ref{definition:bi}, 
similar to $a_{ij}^{(k)}$, 
the changed point $r$ may affect two terms,  
$\mathit{z}_{r}^{(k)}\mathit{z}_{r-i}^{(k)}$
and 
$\mathit{z}_{r+i}^{(k)}\mathit{z}_{r}^{(k)}$, 
in the summation with $l=r$ and $r+i$, respectively. 
There are four cases to consider as well, as illustrated in Figure \ref{fig:imric}(c).
Similar proofs in (2) for each case apply.   
\end{proof}

\subsubsection{Recursive Algorithm}

Algorithm \ref{algorithm:repair-parameter} shows the procedure of incrementally estimating the parameter $\phi^{(k)}$.
For $k=0$ in the first iteration, computing $\boldsymbol{\mathit{A}}^{(0)}, \boldsymbol{\mathit{B}}^{(0)}$ by Equations \ref{equation:mid-matrix-A} and \ref{equation:mid-matrix-B} w.r.t.\  matrices $\boldsymbol{\mathit{Z}}^{(0)},\boldsymbol{\mathit{V}}^{(0)}$, however, is inevitable. 
Nevertheless, the efficient pruning of rows with value 0 in Proposition \ref{proposition:matrix-reduce} can be applied 
as presented in Line \ref{algorithm:estimate-prune} in Algorithm \ref{algorithm:repair-parameter}. 

\begin{algorithm}[h]
\label{algorithm:repair-parameter}
\caption{$\mathsf{Estimate}(\mathit{x}, \mathit{y}^{(k)})$}
 \KwIn{time series $\mathit{x}$ and  intermediate repair result $\mathit{y}^{(k)}$}
 \KwOut{estimated parameter $\phi^{(k)}$}
 \If{$k = 0$}{
 Initialize $\boldsymbol{\mathit{A}}^{(0)}, \boldsymbol{\mathit{B}}^{(0)}$ in Equations \ref{equation:mid-matrix-A} and \ref{equation:mid-matrix-B} by using the pruned matrices $\boldsymbol{\mathit{Z}}^{(0)},\boldsymbol{\mathit{V}}^{(0)}$ in Proposition \ref{proposition:matrix-reduce} \; \label{algorithm:estimate-prune}
 }
 \Else{
 Let $r$ be the changed point in the $k$-th iteration having
 $\mathit{y}_{r}^{(k)} \neq \mathit{y}_{r}^{(k-1)}$\;
 Compute $\boldsymbol{\mathit{A}}^{(k)}, \boldsymbol{\mathit{B}}^{(k)}$ according to 
 $\boldsymbol{\mathit{A}}^{(k-1)}, \boldsymbol{\mathit{B}}^{(k-1)}$ by using the recursive Equations \ref{equation:estimate-recursive-a}-\ref{equation:estimate-recursive-c} \; \label{algorithm:update}
 }
 $\phi^{(k)} \leftarrow (\boldsymbol{\mathit{A}}^{(k)})^{-1}\boldsymbol{\mathit{B}}^{(k)}$ according to Equation \ref{equation:mid-matrix-phi} \;
 \Return{$\phi^{(k)}$}
\end{algorithm}

For the following iterations $k>0$, the recursive computing of $\boldsymbol{\mathit{A}}^{(k)}, \boldsymbol{\mathit{B}}^{(k)}$ from
 $\boldsymbol{\mathit{A}}^{(k-1)}, \boldsymbol{\mathit{B}}^{(k-1)}$ performs. 
As presented in Proposition \ref{the:incremental}, 
all the $p^2+p$ values in $\boldsymbol{\mathit{A}}^{(k)}, \boldsymbol{\mathit{B}}^{(k)}$ can be recursively updated in constant time. 
Consequently, 
the complexity of parameter estimation is reduced from $O(n)$ (referring to Equations \ref{definition:aii}-\ref{definition:bi}) to 
$O(1)$ in Equations \ref{equation:estimate-recursive-a}-\ref{equation:estimate-recursive-c}.

\begin{example}[Parameter estimation using incremental computation, Example \ref{example:parameter-estimation} continued]
\label{example:parameter-estimation-ic}
Consider again $\mathit{x} = \{6, 10, \allowbreak
9.6, 8.3, 7.7, 5.4, 5.6, 5.9, 6.3, 6.8, 7.5, 8.5\}$ and $\mathit{y}^{(0)} = \{6, 5.6, \allowbreak5.4, 8.3, 7.7, 5.4, 5.6, 5.9, 6.3, 6.8, 7.5, 8.5\}$ 
in Example \ref{example:parameter-estimation}.
We have $\boldsymbol{\mathit{V}}^{(0)} = \{-4.4,-4.2,0,0,0,\allowbreak 0,0,0,0,0,0\}'$ 
and $\boldsymbol{\mathit{Z}}^{(0)} = \{0,-4.4,-4.2,0,0,0,\allowbreak 0,0,0,0,0\}'$.
Given $\mathit{p}=1$, the matrices $\boldsymbol{\mathit{A}}^{(0)}, \boldsymbol{\mathit{B}}^{(0)}$ have only one element, 
with $a_{11}^{(0)}=(-4.4)^2+(-4.2)^2=37$ and $b_{1}^{(0)}=(-4.4)*(-4.2)=18.48$
initialized by Line \ref{algorithm:estimate-prune} in Algorithm \ref{algorithm:repair-parameter}.

According to Examples \ref{example:repair-candidate} and \ref{example:repair-evaluation}, 
the repaired point is $\mathit{y}_{4}^{(1)}=6.2$.
We have $\mathit{z}_{4}^{(1)}=6.2-8.3=-2.1$, while $\mathit{z}_{4}^{(0)}=0$.
Line \ref{algorithm:update} in Algorithm \ref{algorithm:repair-parameter} incrementally computes 
$\boldsymbol{\mathit{A}}^{(1)}, \boldsymbol{\mathit{B}}^{(1)}$ from 
$\boldsymbol{\mathit{A}}^{(0)}, \boldsymbol{\mathit{B}}^{(0)}$, 
i.e.,  $a_{11}^{(1)}=a_{11}^{(0)}+(-2.1)^2=41.41$
by using the incremental update in Equation \ref{equation:estimate-recursive-a}, and
$b_{1}^{(1)}=b_{1}^{(0)}+(-2.1-0)*(-4.2+0)=27.3$ referring to Equation \ref{equation:estimate-recursive-c}.
Finally, the parameter ${\phi}_{1}^{(1)}$ is computed according to 
$\boldsymbol{\mathit{A}}^{(1)}, \boldsymbol{\mathit{B}}^{(1)}$
by using Equation \ref{equation:mid-matrix-phi}, 
i.e., ${\phi}_{1}^{(1)}=27.3/41.41=0.66$.
\end{example}


\section{Experiment}\label{sect:experiment}

In this section, we experimentally compare our proposed methods IMR with the state-of-the-art approaches,
including the anomaly detection methods using 
(1) AR \cite{box1994time}, (2) ARX \cite{box1994time}, (3) ARIMA \cite{otto1990two,box1994time},
(4) Tsay \cite{tsay1988outliers} as models, 
(5) the smoothing-based method EWMA \cite{hellerstein2008quantitative}, and 
(6) the constraint-based approach SCREEN \cite{DBLP:conf/sigmod/SongZWY15}.

\paragraph*{GPS data with real errors} 

In the GPS dataset,
real errors are naturally embedded and the corresponding ground truths are manually labeled. 
It collects GPS readings by a person carrying a smartphone and walking around at campus. Since we know exactly the path of walking, a number of 186 dirty points out of 742 points in trajectory are  manually identified. 
True locations of dirty points are also manually labeled, as ground truth. 
(See major results in Section \ref{sect:GPS}.)

\paragraph*{ILD data with synthetic  errors} 

The Intel Lab Data (\url{http://db.csail.mit.edu/labdata/labdata.html},
ILD) 
includes a number of measurements taken from 54 sensors for every 31 seconds in about 38 days. 
Taking 31 seconds as one epoch and omitting the missing data, a dataset of 4912 points is obtained in sensor 1 from Feb 29th to Mar 1st. 
We synthetically inject errors into the data, by shifting the values for an amount of 3 with variance 0.1 under Gaussian distribution (see some examples in Figure \ref{fig:example-ild}). 
Such ``shifting'' errors are very common in practice, for example the sensor is stuck for a short while, or unit error in collection in a period. 
(See major results in Section \ref{sect:ILD}.)

\paragraph*{Criteria}

RMS error \cite{DBLP:conf/vldb/JefferyGF06} is employed to evaluate the repair.
Let $\mathit{x}^\text{truth}$ be the ground truth of clean sequence, $\mathit{x}^\text{dirty}$ be the observation sequence with faults embedded, and $\mathit{x}^\text{repair}$ be the repaired sequence.  
The RMS error \cite{DBLP:conf/vldb/JefferyGF06} is given by: 
$$\Delta(\mathit{x}^\text{truth},\mathit{x}^\text{repair})=
\sqrt{\frac{1}{n}\sum_{i=1}^n(\mathit{x}^\text{truth}_i-\mathit{x}^\text{repair}_i)^2}.$$
The measure evaluates the distance between the ground truth and its repair result. 
Low RMS error is preferred.

\begin{figure}[t]
\centering
\includegraphics[width=\figwidths]{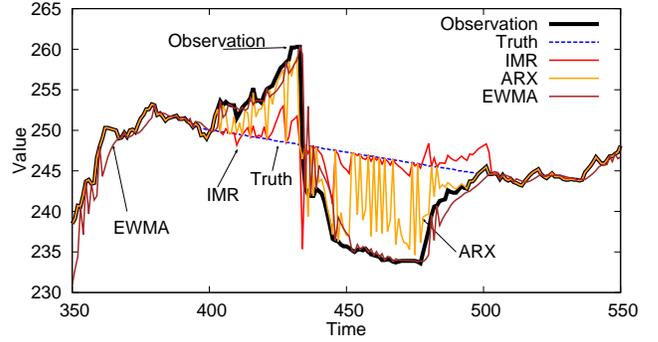}
\caption{GPS example}
\label{fig:example-gps}
\end{figure}

\subsection{Experiments on Real Errors}
\label{sect:GPS}

The experiments on real errors over GPS data consider various algorithm settings, 
including 
(1) order $\mathit{p}$, 
(2) convergence threshold $\tau$, 
(3) max-num-iterations,
and 
(4) labeling rate. 
Similar results are also observed in ILD and omitted.

\subsubsection{Example Results}
\label{sect:exp-example}

Figure \ref{fig:example-gps} illustrates an example part (in latitude, after transformed) of the GPS dataset, including the collected observations with errors, the labeled truth, and the repair results by different methods. 
Owing to various influences such as buildings,  GPS readings may deviate from the truth.  
For instance, the data between time points 400 and 500 are collected from a place near a high building, where significant errors are observed. 
As shown, the proposed IMR shows a repair closest to the truth, compared to other methods.

\begin{figure}[t]
\begin{minipage}{\expwidths}\centering
\hspace{-0.5em}%
\includegraphics[width=\expwidths]{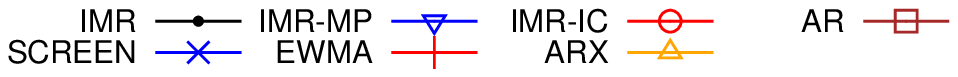}
\hspace{-0.5em}%
\includegraphics[width=0.5\expwidths]{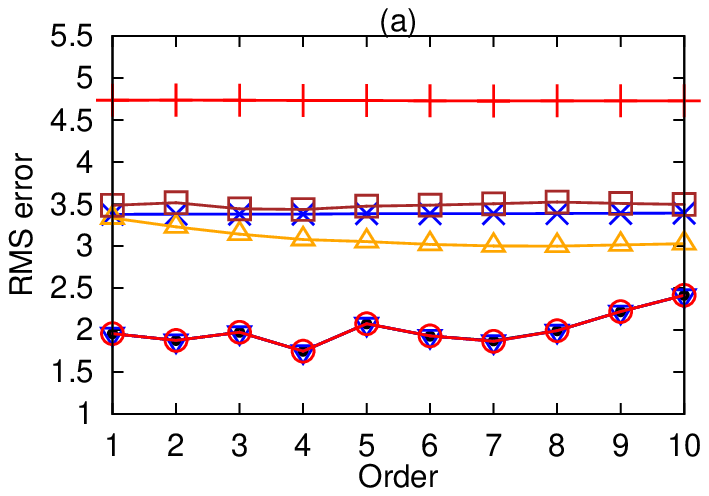}%
\hspace{-0.5em}%
\includegraphics[width=0.5\expwidths]{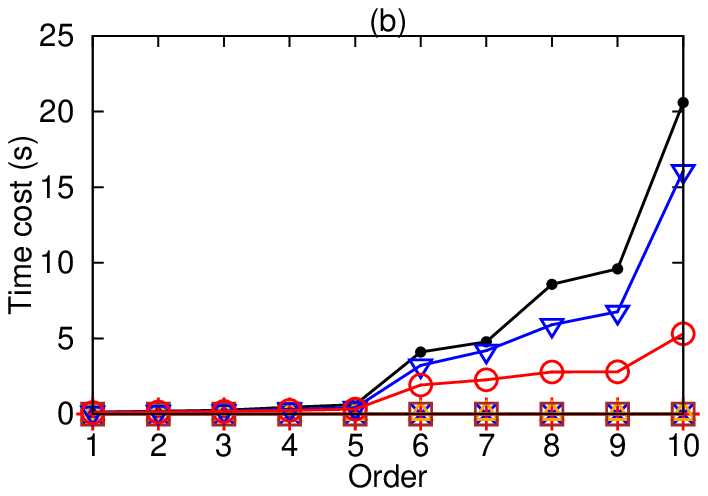}%
\end{minipage}
\caption{Varying order $\mathit{p}$, over GPS with $\tau=0.2$, data size 750, and labeling rate 0.2}
\label{exp:gps-arxp}
\end{figure}

\begin{figure}[t]
\begin{minipage}{\expwidths}\centering
\hspace{-0.5em}%
\includegraphics[width=\expwidths]{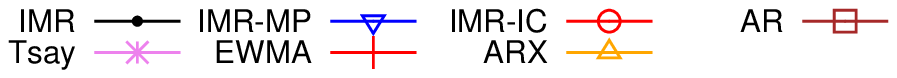} 
\hspace{-0.5em}%
\includegraphics[width=0.5\expwidths]{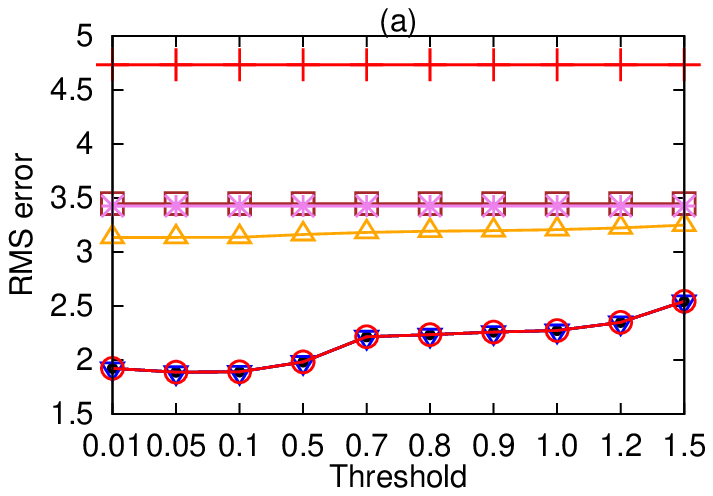}%
\hspace{-0.5em}%
\includegraphics[width=0.5\expwidths]{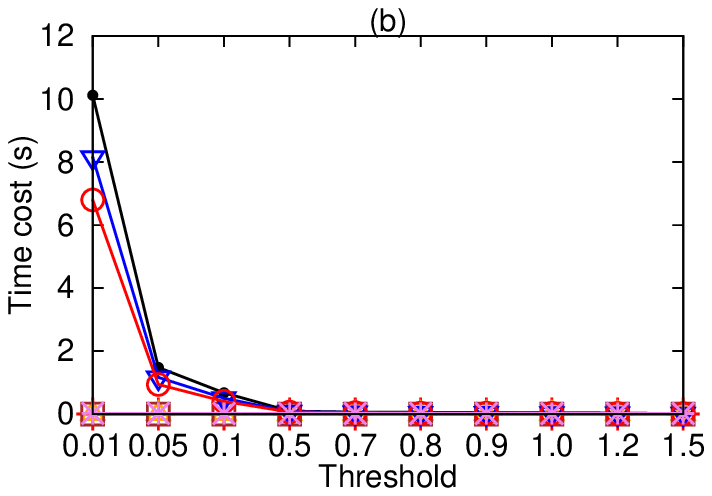}%
\end{minipage}
\caption{Varying threshold $\tau$, over GPS with $p=3$, data size 750, and labeling rate 0.2}
\label{exp:gps-delta}
\end{figure}

\begin{figure}[t]\centering
\begin{minipage}{\expwidths}\centering
\hspace{-0.5em}%
\includegraphics[width=\expwidths]{exp-label-hor-camera} 
\hspace{-0.5em}%
\includegraphics[width=0.5\expwidths]{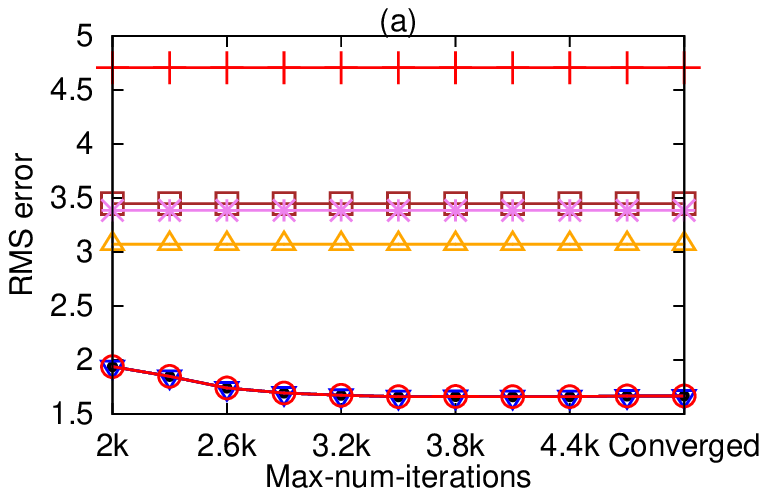}
\hspace{-0.5em}%
\includegraphics[width=0.5\expwidths]{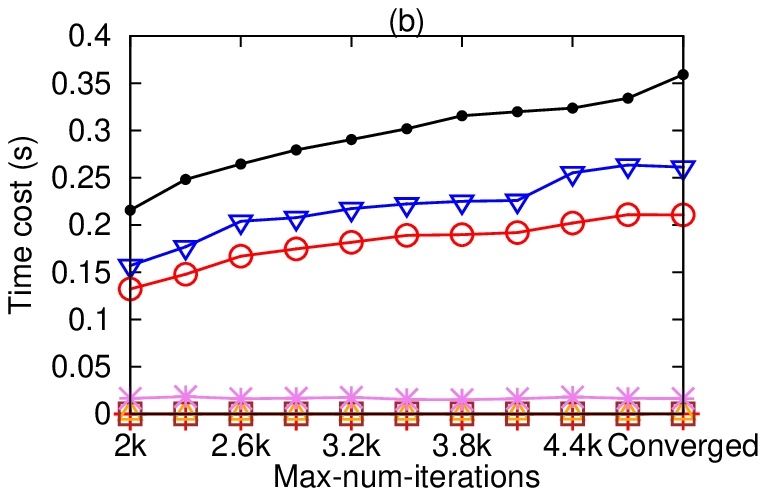}
\end{minipage}
\caption{Varying maximum number of iterations, over GPS with $\tau=0.2, \mathit{p}=3$ and data size 750}
\label{exp:gps-k}
\end{figure}

\begin{figure}[t]
\begin{minipage}{\expwidths}\centering
\hspace{-0.5em}%
\includegraphics[width=\expwidths]{exp-label-hor} 
\hspace{-0.5em}%
\includegraphics[width=0.5\expwidths]{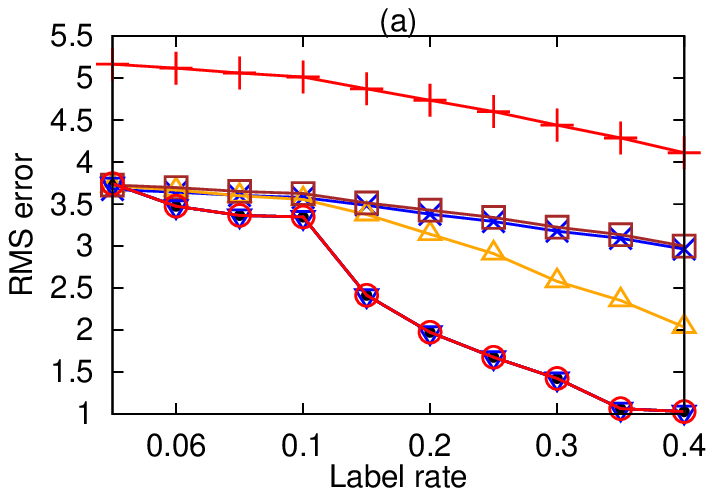}%
\hspace{-0.5em}%
\includegraphics[width=0.5\expwidths]{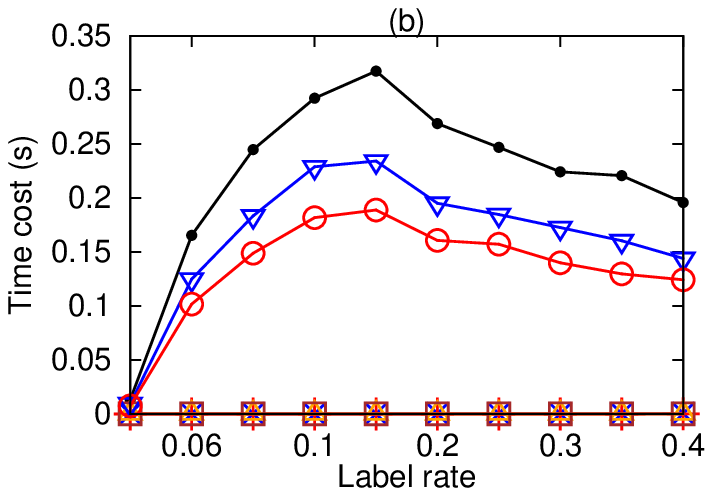}%
\end{minipage}
\caption{Varying labeling rate, over GPS with $\tau=0.2, \mathit{p}=3$ and data size 750}
\label{exp:gps-lrate}
\end{figure}

\begin{figure}[t]
\centering
\includegraphics[width=\figwidths]{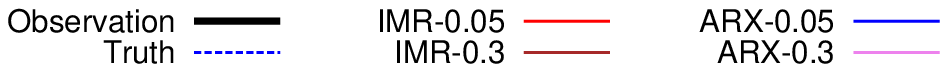}
\includegraphics[width=\figwidths]{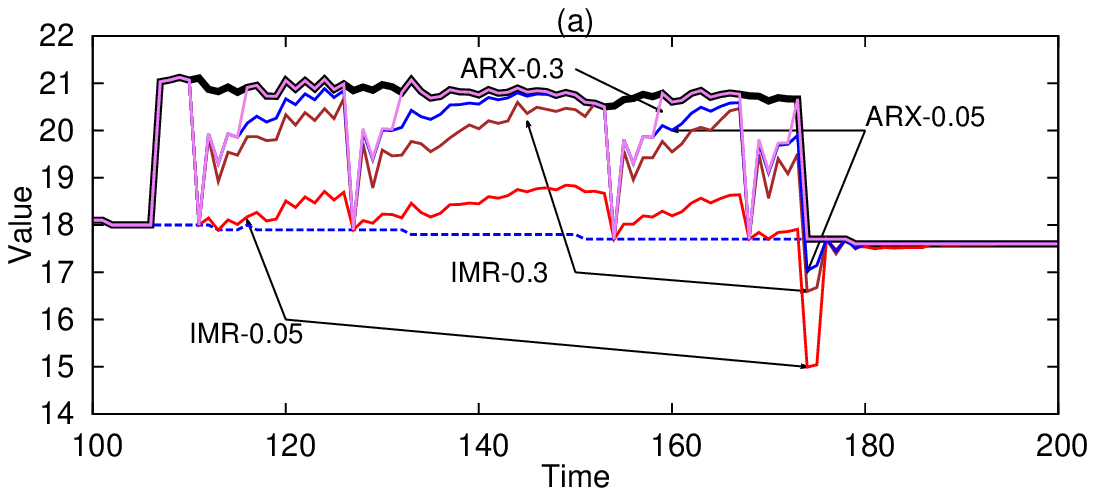} \\
\includegraphics[width=\figwidths]{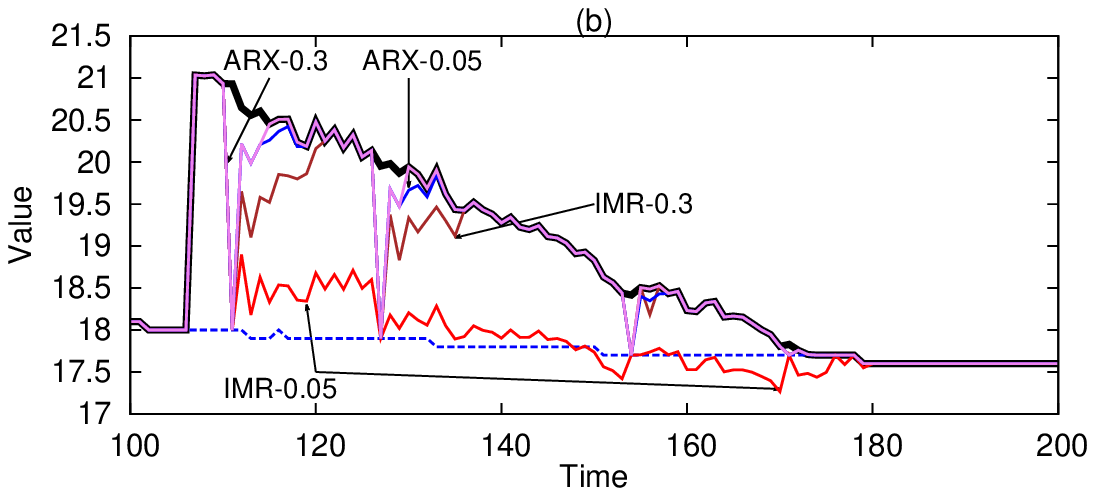}
\caption{ILD example with (a) Shift and (b) Innovational errors  }
\label{fig:example-ild}
\end{figure}

\begin{figure}[t]\centering
\begin{minipage}{\expwidths}\centering
\hspace{-0.5em}%
\includegraphics[width=\expwidths]{exp-label-hor} 
\hspace{-0.5em}%
\includegraphics[width=0.5\expwidths]{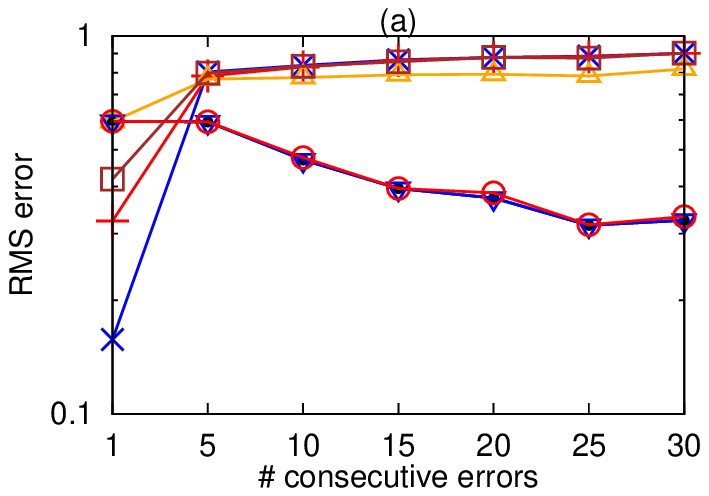}
\hspace{-0.5em}%
\includegraphics[width=0.5\expwidths]{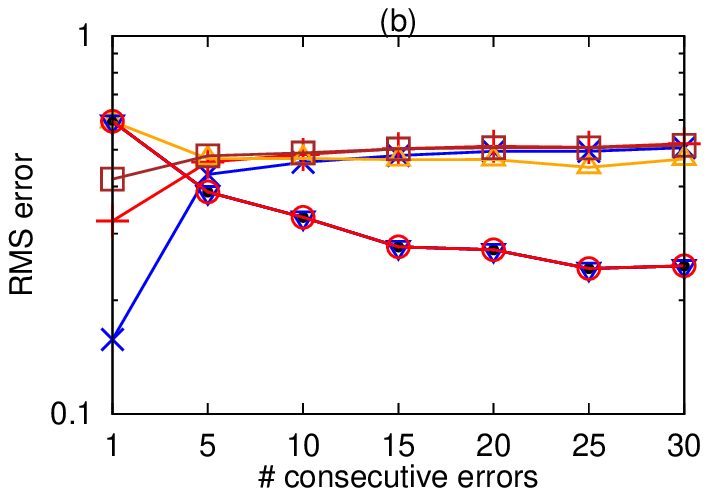} 
\end{minipage}
\caption{Varying the number of consecutive errors, under (a)Shift and (b) Innovational error types, over ILD with $\tau=0.1,\mathit{p}=3$ and data size 3k}
\label{exp:ild-consize}
\end{figure}

\subsubsection{Varying Order $\mathit{p}$}
\label{sect-GPS-order}

Figure \ref{exp:gps-arxp} presents the results on varying order $p$, for 
AR($\mathit{p}$), ARX($\mathit{p}$) and IMR($\mathit{p}$). 
First, as shown in Figure \ref{exp:gps-arxp}(a), AR-based method shows better performance with the increase of order $p$, where more historical values take effect in the predication of each value.
An excessively large $p$, however, does not show further improvement, 
since this simple model may not be able to capture the complicated semantics in a large window.  
Similar results are also observed in ARX for the same reason. 
Remarkably, owing to the iterative strategy with minimum repair in each iteration, our proposed IMR method already achieves low RMS error of repairing even with $p=1$. 
The results verify the necessity of analyzing the special case of IMR(1) with $p=1$ in Section \ref{sect:analysis}.

It is not surprising that the iterative IMR needs higher time costs in Figure \ref{exp:gps-arxp}(b) than other existing methods with only one pass through the data. 
In addition to the original IMR in Algorithm \ref{algorithm:arx-iteration}, 
we also present the results of IMR with matrix pruning (IMR-MP) in Section \ref{sect:matrix-pruning} 
and incremental computation (IMR-IC) in Algorithm \ref{algorithm:repair-parameter} for efficient parameter estimation. 
IMR, IMR-MP and IMR-IC show exactly the same accuracy results in Figure \ref{exp:gps-arxp}(a).
Both efficient estimation methods improve the time costs in Figure \ref{exp:gps-arxp}(b). 
In particular, IMR-IC for incremental parameter estimation with constant time significantly reduces time costs. 

In summary, as illustrated in Figures \ref{exp:gps-arxp} and \ref{exp:ild-p} over the GPS and ILD datasets, respectively, our proposed IMR has no clear preference of order $\mathit{p}$ in repairing accuracy, 
while larger order $\mathit{p}$ leads to higher time cost.

\subsubsection{Varying Convergence Threshold $\tau$}
\label{sect-experiment-threshold}

Figure \ref{exp:gps-delta} reports the results by varying the threshold $\tau$ (with $p=3$). 
By setting a small $\tau$, IMR needs more iterations to converge.
The corresponding time costs in Figure \ref{exp:gps-delta}(b) are higher. 
Better repairing performance is achieved by IMR with a small $\tau$, as shown in Figure \ref{exp:gps-delta}(a).
However, by further reducing the threshold $\tau$, e.g., from 0.1 to 0.01, the repair accuracy could hardly be further improved, while the corresponding iterations and time costs significantly increase. 
On the other hand, by increasing the threshold $\tau$, the time costs reduce. 
Indeed,
the threshold $\tau$ provides a trade-off between repair accuracy and time costs for IMR. 

In summary, a lower threshold indeed leads to better results (lower RMS error) and needs more iterations (higher time costs). 
See Figure \ref{exp:ild-delta} on ILD for more clear impact of the threshold.

\subsubsection{Specifying Maximum Number of Iterations}
\label{sect:max-iteration-exp}

In Section \ref{sect:analysis}, we analyze several special cases, 
where repairing is guaranteed to converge in theory under some conditions.  
For general cases where such conditions are not met, 
(although all the experiments converge in Section \ref{sect:experiment} 
under various settings with/without theoretical convergence guarantee),  
one may specify the maximum number of iterations, as a remedy in practice to avoid waiting for convergence.
That is, Algorithm \ref{algorithm:arx-iteration} terminates when the iteration number reaches \emph{max-num-iterations}, even if the convergence condition in Line \ref{algorithm:repair-converge} is not met. 

Figures 
\ref{exp:gps-k} evaluates various settings of maximum number of iterations 
(average time costs are reported by repeating each test 10 times).%
As illustrated, a moderately large number of iterations already achieve good repair results, i.e., close to the (right-most) converged results.

\subsubsection{Varying Labeling Rate}

Figure \ref{exp:gps-lrate} illustrates the results on various labeling rates. 
A labeling rate 0.1 denotes that 10\% data points are labeled with truth in the dataset. 
It is not surprising that the higher the labeling rate is,  the better the repair performance of IMR and ARX will be, which utilize the labeled truth, 
as illustrated in Figure \ref{exp:gps-lrate}(a). 
An interesting result is that with the increase of labeling rate, 
the corresponding time costs in Figure \ref{exp:gps-lrate}(b), first increase and then drop. 
The reason is that for a small labeling rate (say 0.06) with points barely modified, 
the iterative repair can quickly converge, while leaving most dirty data unchanged. 
The corresponding RMS error is high in this case as shown in Figure \ref{exp:gps-lrate}(a). 
With more data labeled in the input, more dirty points will be identified and repaired by the algorithm, leading to higher computation costs. 
When the labeling rate is large, such as 0.25, a great number of dirty points may be labeled. Thereby, the iterative repair could  converge quickly again. 

As anomaly detection methods, the results of Tsay and ARIMA are generally similar to those of AR and ARX. 
SCREEN and EWMA methods are not affected by  order $p$, threshold $\tau$ and  labeling rate. 
It is not surprising that SCREEN performs weakly, which verifies our motivation and analysis in the Introduction. 
Similarly, since EWMA does not utilize the labeled truth, its performance is weak.

\begin{figure}[t]
\begin{minipage}{\expwidths}\centering
\hspace{-0.5em}%
\includegraphics[width=\expwidths]{exp-label-hor}
\hspace{-0.5em}%
\includegraphics[width=0.5\expwidths]{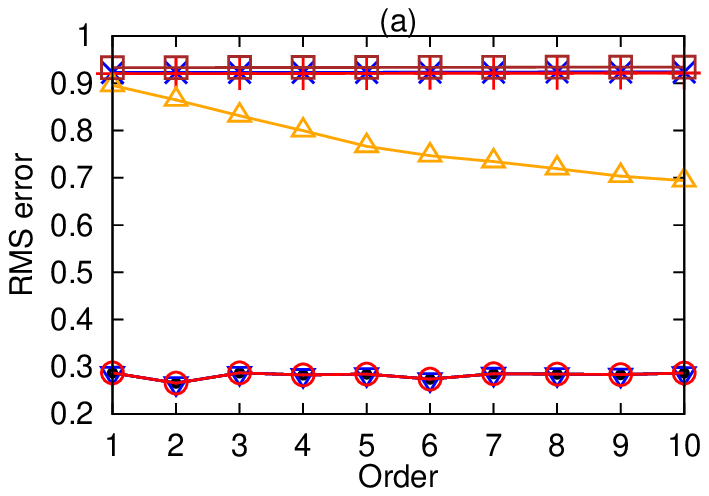}%
\hspace{-0.5em}%
\includegraphics[width=0.5\expwidths]{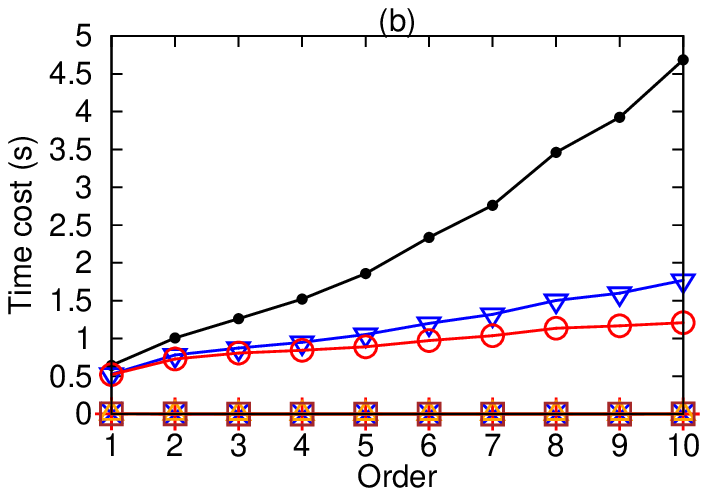}%
\end{minipage}
\caption{Varying order $\mathit{p}$, over ILD with $\tau=0.1$, data size 3k, and labeling rate 0.2}
\label{exp:ild-p}
\end{figure}

\begin{figure}[t]
\begin{minipage}{\expwidths}\centering
\hspace{-0.5em}%
\includegraphics[width=\expwidths]{exp-label-hor-camera} 
\hspace{-0.5em}%
\includegraphics[width=0.5\expwidths]{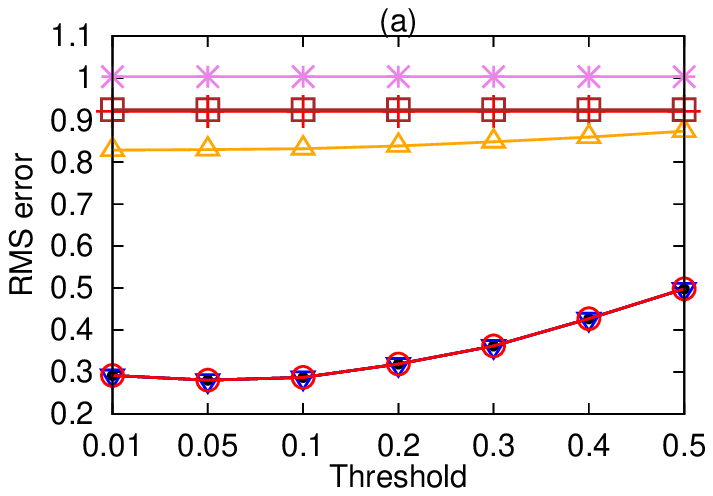}%
\hspace{-0.5em}%
\includegraphics[width=0.5\expwidths]{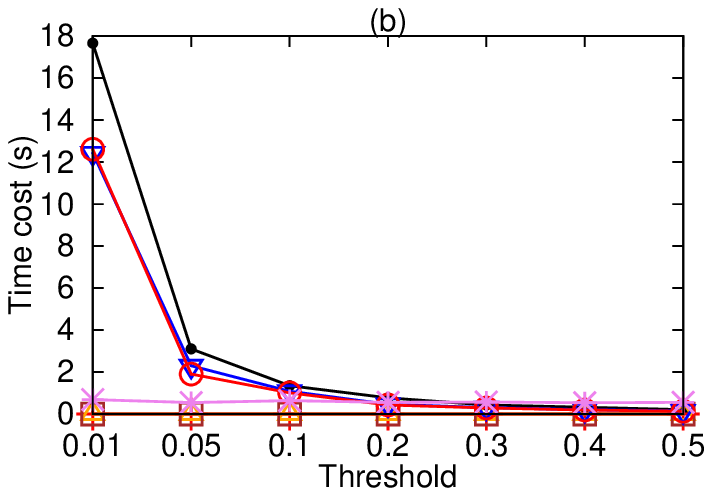}%
\end{minipage}
\caption{Varying threshold $\tau$, over ILD with $p=3$, data size 3k, and labeling rate 0.2}
\label{exp:ild-delta}
\end{figure}

\subsection{Experiments on Synthetic Errors}
\label{sect:ILD}

The experiments on synthetic errors over ILD data focus on varying the errors.
Again, similar results are also observed in the other dataset GPS and thus omitted.

\subsubsection{Evaluation on Various Errors}
\label{sect:exp-ILD-error}

We consider Shift and Innovational errors \cite{box1975intervention, tsay1988outliers}, as the example illustrated in Figure \ref{fig:example-ild}, where Spike errors are considered as a special case with \# consecutive errors = 1. 
The corresponding accuracy results are reported in  Figure \ref{exp:ild-consize}. 
In general, similar results are observed over Innovational and Shift errors. 
That is, while our proposed IMR may not deal with Spike errors (i.e., with \# consecutive errors = 1 in Figure \ref{exp:ild-consize}) as good as SCREEN \cite{DBLP:conf/sigmod/SongZWY15}, 
IMR always shows significantly better results (lower RMS measure) in addressing a large number of consecutive errors, on both Shift and Innovational error patterns. 
The results demonstrate again that 
our proposal works well in repairing consecutive errors.

As illustrated in Figure \ref{fig:example-ild}(a), not only the proposed IMR but also ARX with a small threshold $\tau$ suffers from an overcorrection when the time series shifts back to non-anomalous data. 
The number attached to each method, e.g., IMR-0.05, denotes $\tau=0.05$ for IMR.
As shown, there is a trade-off in both IMR and ARX: 
a smaller threshold $\tau$ shows better results in dealing with consecutive errors, but leads to overcorrection when the shift ends.  
Nevertheless, as presented in Figures \ref{exp:gps-delta} and \ref{exp:ild-delta}, 
a smaller threshold $\tau$ generally has better overall accuracy.
It is also worth noting that existing methods, EWMA smoothing and SCREEN, cannot handle well the repairing either when the time series shifts back to non-anomalous data, as the example illustrated in Figure \ref{fig:simple-example}.

\begin{figure}[t]\centering
\begin{minipage}{\expwidths}\centering
\hspace{-0.5em}%
\includegraphics[width=\expwidths]{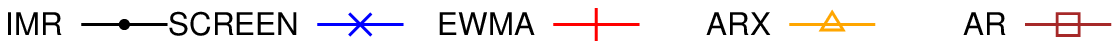} 
\hspace{-0.5em}%
\includegraphics[width=0.5\expwidths]{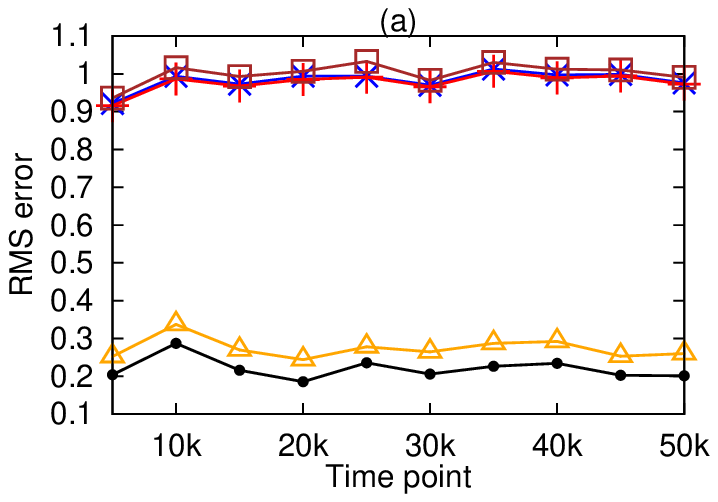}
\hspace{-0.5em}%
\includegraphics[width=0.5\expwidths]{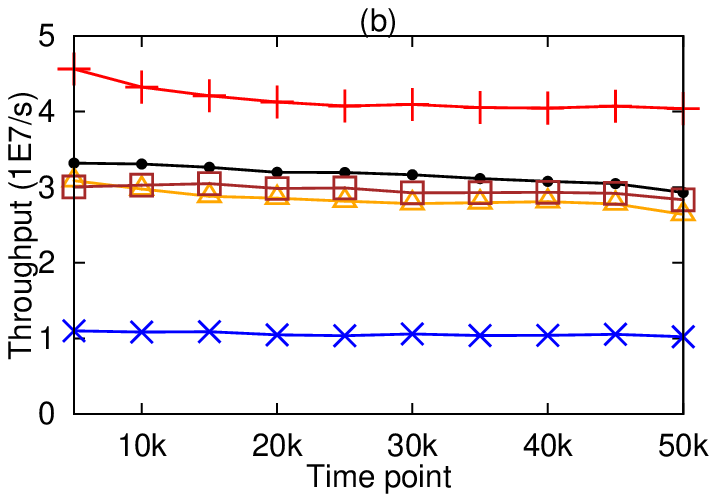}
\end{minipage}
\caption{Online computing, over ILD}
\label{exp:ild-stream}
\end{figure}

\subsubsection{Evaluation on Online Computing}
\label{sect:online-exp}

As long as online labeling is available (discussed in Section \ref{sect-introduction-intuition}), 
the proposed IMR is applicable. 
Remarkably, by interpreting all the historical data as one labeled segment, 
the direct calculation of repairs without iterating in Section \ref{sect:special-one-label} can be applied for efficient online computing. 

Figure \ref{exp:ild-stream} presents the results of online repairing. 
Since errors are randomly introduced in the dataset, to obtain reliable results, we repeat 10 times for each test with random error introducing and report the average.
Troughput of IMR is stable and comparable to others.
IMR again shows the best repair.
Improvement of IMR compared to ARX is not as significant as in other experiments. 
It is not surprising, referring to the similar Equation \ref{equation:ARX} for ARX and Equation \ref{equation:repair-converge-single} for IMR(1).
The advantage of IMR is that no threshold parameter is required for IMR(1) with one labeled segment in Section \ref{sect:special-one-label}, 
while ARX needs to set threshold $\tau$.

\begin{figure}[t]\centering
\begin{minipage}{\expwidths}\centering
\hspace{-0.5em}%
\includegraphics[width=\expwidths]{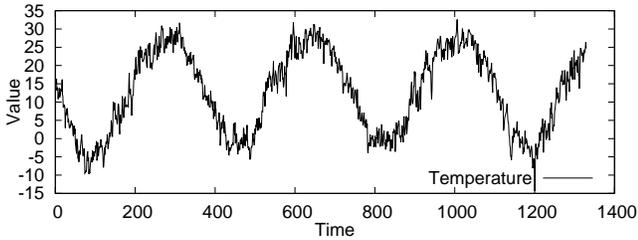}
\end{minipage}
\caption{Cyclic Temperature example}
\label{fig:example-cma}
\end{figure}

\begin{figure}[t]\centering
\begin{minipage}{\expwidths}\centering
\hspace{-0.5em}%
\includegraphics[width=\expwidths]{exp-label-hor} 
\hspace{-0.5em}%
\includegraphics[width=0.5\expwidths]{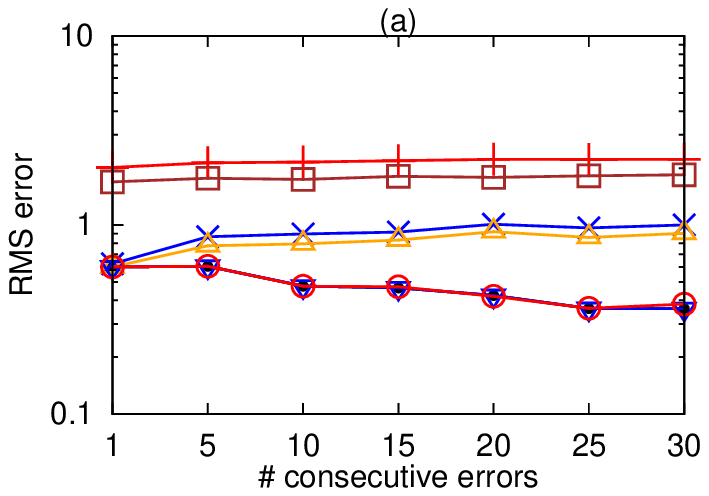}
\hspace{-0.5em}%
\includegraphics[width=0.5\expwidths]{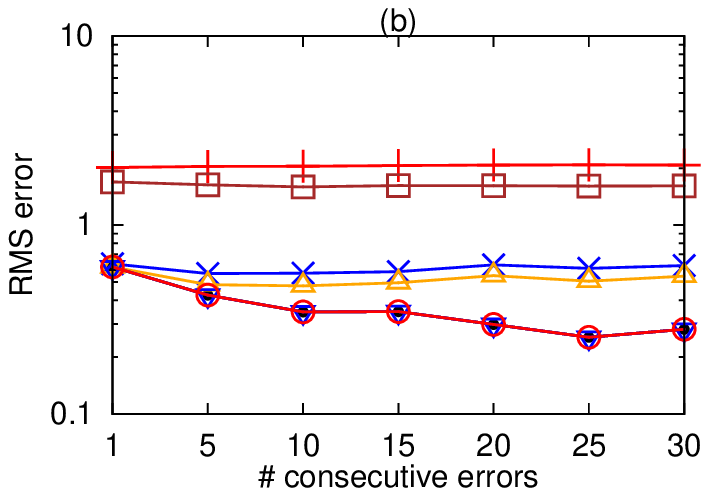}
\end{minipage}
\caption{Varying  number of consecutive errors, under (a)Shift and (b)Innovational error types,  over Temperature with $\tau = 0.2, \mathit{p}=3$ and data size 1.5k}
\label{exp:cma-consize}
\end{figure}

\subsubsection{Experiments on Temperature}
\label{sect:temperature-exp}

To evaluate over cyclic time series, we employ another data set on temperature in years (\url{http://data.cma.cn},
with cyclic patterns as illustrated in Figure \ref{fig:example-cma}). 
Similar to ILD, we inject synthetic errors to the temperature data. 
Figure \ref{exp:cma-consize} presents the results over different error types.
Again, the results are generally similar to those on ILD in Figure \ref{exp:ild-consize}.
That is, the proposed IMR shows significantly better results when dealing with a large number of consecutive errors. 
SCREEN shows no better results in this dataset under Spike errors (i.e., \# consecutive errors = 1), since the clean data also contain a large number of Spikes as illustrated in Figure \ref{fig:example-cma} and cannot be distinguished from errors.

\section{Related Work}\label{sect:related}

The idea of performing repair in multiple iterations has also been studied \cite{DBLP:conf/icde/VolkovsCSM14}, where the past repairs could help in recommend more accurate repairs in the future.
The continuous data cleaning approach \cite{DBLP:conf/icde/VolkovsCSM14} however is not directly applicable in our problem, 
since it employs FD constraints which is not available over time series data.

\subsection{Anomaly Detection over Temporal Data}
\label{sect-related-anomaly}

AR and ARX indeed have been widely used for anomaly detection in various areas such as economics and social surveys \cite{box1994time,brockwell2016introduction}.
We consider ARX \cite{park2005outlier} in this study, since this approach can utilize the labeled truth and could be adapted to cooperate with the minimum change principle in data repairing
(as shown in Section \ref{sect:framework}). 

Hellerstein \cite{hellerstein2008quantitative} surveys methods 
on cleaning errors in quantitative attributes of large databases, 
where time series data are also discussed as a special scenario. 
For instance, 
Tsay \cite{tsay1988outliers} presents unified methods for detecting and handling outliers and structure changes in a univariate timeseries. Iterative procedures consist of specification-estimation-detecting-removing cycles to handle one-by-one the most significant disturbance.
ARIMA \cite{otto1990two,box1994time} is a  general parametric family of time series, consisting of autoregressive process and moving-average process. It can also incorporate a wide range of nonstationary series.

\subsection{Smoothing-based Cleaning}
\label{sect-related-smooth}

Smoothing techniques are often employed to eliminate noisy data.
For example, the simple moving average (SMA) \cite{brillinger2001time} smooths time series data by computing the unweighted mean of the last $k$ points.
Instead of weighting equally, 
the exponentially weighted moving average (EWMA) \cite{gardner2006exponential} assigns exponentially decreasing weights over time.
As indicated in \cite{DBLP:conf/sigmod/SongZWY15}, 
also illustrated in Figure \ref{fig:example-dirty} in Example \ref{fig:example-dirty} 
and observed in Figure \ref{fig:example-gps} (EWMA) in the experiments,
the smoothing methods may seriously alter the original correct data, and thus have low repair accuracy.
In contrast, our minimum change-based approach, applying only high confidence repairs, could preserve most the original values with a better repair accuracy. 

\subsection{Constraint-based Cleaning}
\label{sect-related-constraint}

Constraint-based repairing is widely considered in cleaning dirty data, such that the repaired data satisfies some given constraints and the repair modification is minimized \cite{DBLP:conf/sigmod/BohannonFFR05,DBLP:conf/icde/ChuIP13}.
The constraints are often defined over multiple attributes \cite{DBLP:journals/tods/Song011,DBLP:journals/vldb/Song0Y13,DBLP:journals/tkde/Song0C14}, which are not available in a univariate time series considered in this study. 
To clean sequential data, 
existing study \cite{DBLP:conf/sigmod/SongZWY15} employs a class of speed constraints declaring that the speeds of value changes should be bounded. 
The repairing is thus to modify the sequence towards the satisfaction of such speed constraints.
This constraint-based repairing falls short in two aspects: 
(1) it cannot handle a sequence of continuous errors, and 
(2) the labeled truth is not utilized. 
Rather than hard constraints, the cleaning is further extended to maximizing the likelihood w.r.t.\ speeds \cite{DBLP:conf/sigmod/ZhangSW16}.
Finally, the constraints are also utilized to clean the timestamps \cite{DBLP:journals/pvldb/SongC016} and the qualitative event data \cite{DBLP:journals/tkde/0001SZLS16}, which are different and not applicable to the quantitive time series data studied in this paper.

\section{Conclusion}
\label{sect:conclusion}

In this paper, we study the problem of repairing dirty time series data, 
given the labeled truth of some data points. 
(1)
While existing anomaly detection techniques could be adapted to repairing, we argue that 
significant deviation (between observation and predication) based anomaly detection
is inconsistent with the minimum change principle in data repairing.  
Our experiments over real datasets illustrate such inconformity of applying anomaly detection in anomaly repairing. 
(2)
We thereby propose an iterative minimum repairing (IMR) algorithm. 
By creatively performing one minimum repair in each iteration of error predication, 
the algorithm bonds the beauty of capturing temporal nature in anomaly detection 
with the minimum change in data repairing. 
Again, the experiments demonstrate the superiority of our proposal.
Remarkably, in contrast to anomaly detection approaches AR and ARX, our proposed IMR is not sensitive to the setting of order $\mathit{p}$, i.e., a small $\mathit{p}$ is sufficient to achieve high repair accuracy with low time costs. 
(3)
The convergence of IMR is explicitly analyzed. 
In particular, we show that the converged repair result could be directly calculated without iterative computing in certain cases, 
which enables efficient online repairing over streaming data. 
It is worth noting that unlike the existing ARX, no threshold needs to be specified for IMR in online computing.
(4) Finally, we design efficient pruning and incremental computation, 
which reduce the complexity of parameter estimation from linear time to constant time. 
Experiments illustrate the significant improvement on time performance by pruning and incremental computation.

\section*{Acknowledgment}

This work is supported in part by
National Key Research Program of China under Grant 2016YFB1001101;
China NSFC under Grants 61572272, 61325008, 61370055, 61672313 and 61202008;
Tsinghua University Initiative Scientific Research Program.
Shaoxu Song is a corresponding author.

\bibliographystyle{abbrv}
\bibliography{anomaly,bib/stream,bib/quality,bib/dependency,bib/method,bib/theory}

\vspace{-4em}
\includegraphics[height=0in]{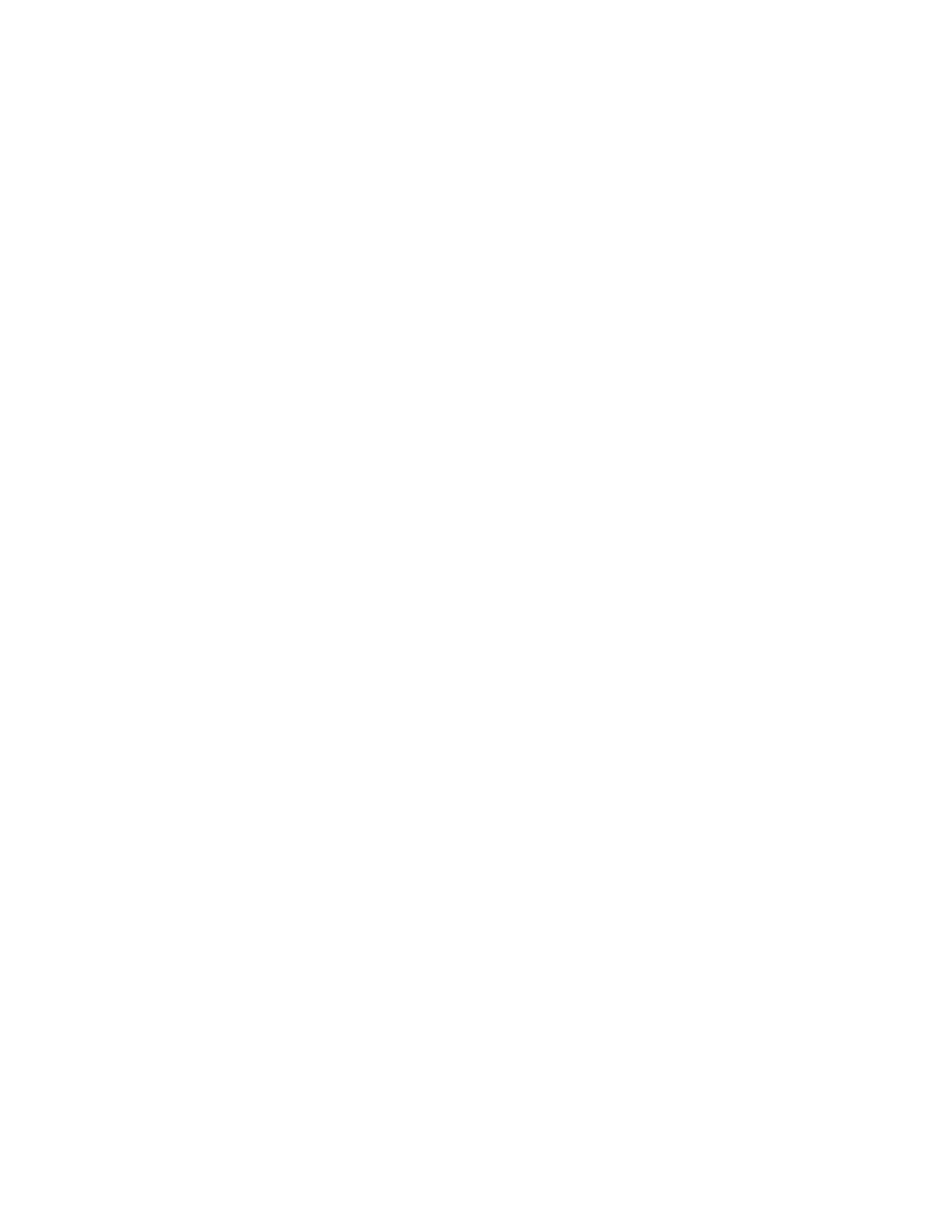}
\end{document}